\newtheorem{remark}{Remark}\newtheorem{theorem}{Theorem}
\newtheorem{lemma}{Lemma}
\newtheorem{corollary}{Corollary}
\def\ScaleIfNeeded{
\ifdim\Gin@nat@width>\linewidth \linewidth \else \Gin@nat@width
\fi } \makeatother
\begin{document}
\title{Enhancing User Fairness in Wireless Powered Communication Networks with STAR-RIS}
\author{
	Guangyu~Zhu,~\IEEEmembership{Graduate Student Member,~IEEE,}
	Xidong~Mu,~\IEEEmembership{Member,~IEEE,}
	Li~Guo,~\IEEEmembership{Member,~IEEE,}
	Ao~Huang,~\IEEEmembership{Graduate Student Member,~IEEE,}
	Shibiao~Xu,~\IEEEmembership{Member,~IEEE}
	\thanks{An earlier version of this paper was presented in part at the IEEE Global Communications Conference Workshops (GC Wkshps), Kuala Lumpur, Malaysia, December 4-8, 2023 \cite{Zhu_conference}.}
	\thanks{Guangyu Zhu, Li Guo, Ao Huang and Shibiao Xu are with the Key Laboratory of Universal Wireless Communications, Ministry of Education, Beijing University of Posts and Telecommunications, Beijing 100876, China, also with the School of Artificial Intelligence, Beijing University of Posts and Telecommunications, Beijing 100876, China, and also with the National Engineering Research Center for Mobile Internet Security Technology, Beijing University of Posts and Telecommunications, Beijing 100876, China (email:\{Zhugy, guoli, huangao, shibiaoxu\}@bupt.edu.cn).}
	\thanks{Xidong Mu is with the Centre for Wireless Innovation (CWI), Queen's University Belfast, Belfast, BT3 9DT, U.K. (e-mail:
		x.mu@qub.ac.uk).}
}
\maketitle
\begin{abstract}
A simultaneously transmitting and reflecting reconfigurable intelligent surface (STAR-RIS) assisted wireless powered communication network (WPCN) is proposed, where two energy-limited devices first harvest energy from a hybrid access point (HAP) and then use that energy to transmit information back.
To fully eliminate the \emph{doubly-near-far} effect in WPCNs, two STAR-RIS operating protocol-driven transmission strategies, namely energy splitting non-orthogonal multiple access (ES-NOMA) and time switching time division multiple access (TS-TDMA) are proposed. For each strategy, the corresponding optimization problem is formulated to maximize the minimum throughput by jointly optimizing time allocation, user transmit power, active HAP beamforming, and passive STAR-RIS beamforming. For ES-NOMA, the resulting intractable problem is solved via a two-layer algorithm, which exploits the one-dimensional search and block coordinate descent methods in an iterative manner. For TS-TDMA, the optimal active beamforming and passive beamforming are first determined according to the maximum-ratio transmission beamformer. Then, the optimal solution of the time allocation variables is obtained by solving a standard convex problem. Numerical results show that: 1) the STAR-RIS can achieve considerable performance improvements for both strategies compared to the conventional RIS; 2) TS-TDMA is preferred for single-antenna scenarios, whereas ES-NOMA is better suited for multi-antenna scenarios; and 3) the superiority of ES-NOMA over TS-TDMA is enhanced as the number of STAR-RIS elements increases.
\end{abstract}
\begin{IEEEkeywords}
Reconfigurable intelligent surfaces, simultaneous transmission and reflection, wireless powered communication networks, fairness performance.
\end{IEEEkeywords}

\section{Introduction}
Information handling services (IHS) Markit’s latest report predicts that the number of connected Internet-of-Things (IoT) devices will surge at an average annual rate of 12$\%$ and reach 125 billion in 2030 from 27 billion in 2017\cite{6G}. However, limited by the high cost and instability of currently used battery replacement or energy harvesting (EH)-based approaches, they will struggle to fulfill the sustainable communication requirements of massive energy-limited IoT devices in future networks \cite{Bi_WPCN}. To overcome this issue, a potential paradigm based on radio-frequency (RF) energy harvesting technology \cite{RF_energy}, known as the wireless powered communication network (WPCN), has been proposed in \cite{Zhang_WPCN,WPT_Survey}. According to \emph{harvest-then-transmit} (HTT) protocol \cite{Zhang_WPCN}, the entire communication process of a WPCN is divided into two phases, i.e., downlink wireless power transfer (WPT) and uplink wireless information transfer (WIT). In this regard, the energy-limited devices first harvest energy from the access point (AP) in the downlink and then use the harvested energy to transmit their data back in the uplink. Therefore, the rational time allocation between the two phases may facilitate a stable and controllable energy supply in communication networks. However, since both WPT and WIT are carried by RF signals, the system performance of WPCN is more susceptible to the transmission environment. Especially for long-distance transmission, the extremely low efficiency of WPT has become a major obstacle to the development and implementation of WPCN in practice.
In addition, having the same AP for power transmission and information reception will inevitably incur the \emph{doubly-near-far} phenomenon \cite{Zhang_WPCN}. In this case, the devices further away from the AP harvest less energy in the WPT but have to consume more power for the WIT, resulting in unfair performance between near devices and far devices. Accordingly, separating WPT and WIT execution on different APs becomes a promising solution to the unfairness problem \cite{Wu_double}. However, this design normally requires different antenna and RF systems with higher power consumption, design complexity, and hardware costs, which virtually poses new barriers to practical deployment. 

Recently, reconfigurable intelligent surfaces (RISs) \cite{RIS_survey} have emerged as an attractive solution to improve the transmission efficiency of future communication systems \cite{Huang_EE}. The RIS is a two-dimensional (2D) surface, comprising a large number of low-cost and passive meta-materials with tunable reflection properties. By dynamically adjusting the phase and amplitude of each element, the propagation environment can be reconstructed to enhance desirable signals and attenuate unwanted ones \cite{Wu_tutorial}. Inspired by this, the applications of RISs to various wireless networks, including WPCNs, have become a focus topic \cite{RIS_application1,Pan_application,WPCN_Application}. However, the feature of covering only \emph{half-space} variably weakens the effectiveness and flexibility of RIS deployment in practical implementation. To break this limitation, a new idea of RISs, known as simultaneously transmitting and reflecting RISs (STAR-RISs), has been suggested in \cite{Mu_star}. On the basis of conventional RIS reflection only, STAR-RISs can also transmit the incident signal to the opposite side of the RIS, thus achieving \emph{full-space} coverage \cite{Mu_survey}. Additionally, STAR-RISs has the capability to operate in three different modes: energy splitting (ES), time switching (TS), and mode switching (MS). Benefiting from this, more enhanced degrees of freedom (DoFs) and flexible communication resource allocation strategies can be exploited in system design, which renders the STAR-RIS a potential technique to enhance the performance of WPCNs from different aspects.

\subsection{Prior Works}
\emph{1) Studies on Conventional WPCNs}: Since the performance of WPCNs is not only dominated by uplink WIT but also governed by downlink WPT, balancing the two phases has become an essential issue in WPCNs. To this end, extensive research efforts have been devoted to resource allocation for WPCNs. Specifically, the authors of \cite{Zhang_WPCN} first introduced a simple WPCN employing time division multiple access (TDMA) for WIT, and then explored a time allocation strategy that maximizes the sum throughput. Following this, the time allocation optimization problem was further extended in \cite{WPCN_TDMA}, where a novel dynamic TDMA framework was applied in the WPCN. In addition to TDMA, the non-orthogonal multiple access (NOMA) strategy was also considered as a potential strategy for the uplink WIT of the WPCN in \cite{WPCN_NOMA}, where resource allocation was designed for the formulated sum throughput maximization problem. Moreover, the authors of \cite{Wu_Comparrsion} studied both TDMA-based and NOMA-based WPCNs together and revealed that NOMA is neither spectral efficient nor energy efficient for WPCN compared to TDMA. In \cite{Chi_Comparsion}, the authors investigated the energy provision minimization problem with the sum throughput requirement for both TDMA-based and NOMA-based WPCNs and came to similar conclusions as in \cite{Wu_Comparrsion}.
However, the above studies all focused only on the overall performance of the system and ignored the performance differences between individual devices. In fact, due to the emergence of the doubly-near-far effect, user fairness is more difficult to guarantee in WPCNs than in conventional communication networks. To address this challenge, in \cite{Zhang_WPCN}, the authors introduced a common throughput concept for the WPCN and then maximized the minimum throughput between two users by optimizing the time allocation of the TDMA strategy. The authors of \cite{Chen_CTM} studied a common throughput maximization problem for a NOMA-based WPCN, where the time allocation, transmit power of devices, and decoding order of devices' signals were jointly optimized. Furthermore, the authors of \cite{NOMA_fairness} suggested a novel transmission scheme with NOMA for the WPCN and showed that the scheme achieves a noteworthy enhancement in the rate for cell edge users compared to TDMA, thus improving system fairness.

2) \emph{Studies on RISs/STAR-RISs Assisted WPCNs}: Recently, the great potential advantages of using RISs/STAR-RISs for improving transmission efficiency have sparked a surge of interest in RISs/STAR-RISs assisted WPCNs. For instance, the authors of \cite{Xu_WPCN} considered a RIS aided WPCN where the power station and receive station are deployed separately, for which a radio resource and passive beamforming can be jointly optimized to improve system energy efficiency. In contrast, a RIS was introduced to support the downlink WPT and uplink WIT between a hybrid access point (HAP) and multiple devices in \cite{Wu_dynamic}. Moreover, the authors proposed three dynamic beamforming design frameworks for the sum throughput maximization problem. Driven by the same goal, the authors in \cite{Zheng_WPCN} studied a more complicated RIS assisted multiple input single output WPCN, where the optimization goal was achieved by jointly optimizing the active HAP beamforming, passive RIS beamforming as well as time allocation. Further considering a non-linear EH model, authors of \cite{Hua_MIMO} jointly optimized the active/passive beamforming and downlink/uplink time allocation to maximize the weighted sum throughput of a RIS assisted multiple input multiple output full-duplex WPCN. In addition, the authors of \cite{Self-Sustainable} adopted a realistic-based power consumption RIS model and investigated the sum throughput maximization problem in a self-sustainable RIS enabled WPCN. Different from the above TDMA-based WPCN, the sum throughput maximization problem for a RIS assisted NOMA-based WPCN was also examined in \cite{Wu_NOMA,Song_NOMA}. Particularly, the authors of \cite{Wu_NOMA} confirmed that only one RIS beamforming is needed in both downlink WPT and uplink WIT. On this basis, the authors of \cite{Zhang_hybrid_NOMA} proposed a hybrid-NOMA scheme for the uplink WIT to balance the beamforming complexity of the RIS and the sum throughput performance of the system. 
However, as mentioned above, the conventional RIS can only cover half-space for the WPCN. To break this limitation, the STAR-RIS was introduced into WPCNs in \cite{Du_STAR_WPCNs,Xie_STAR_WPCNs,Qin_STAR_MEC}. Among them, the authors investigated the sum throughput maximization problem for a STAR-RIS assisted WPCN in \cite{Du_STAR_WPCNs} and adopted TS working mode throughout the communication period. In \cite{Xie_STAR_WPCNs}, a STAR-RIS aided wireless powered NOMA system was proposed where ES and TS were considered for the downlink WPT while only ES was employed for the uplink WIT. On the contrary, the authors of \cite{Qin_STAR_MEC} applied the ES protocol in the downlink WPT and all three protocols in the uplink WIT for a STAR-RIS assisted wireless powered mobile edge computing (MEC) system. The results showed that TS is the best-performing protocol in terms of achieving uplink sum throughput.
\subsection{Motivations and Contributions}
Due to the doubly-near-far effect, addressing user fairness in WPCNs is a more urgent and challenging issue compared to conventional communication networks. Although RISs with channel reconfiguration capabilities are regarded as potential solutions to this problem, existing research on RISs in WPCNs remains focused on improving overall system performance rather than considering user fairness. This leaves a significant gap in the application of RISs for enhancing fairness in WPCNs. Moreover, compared to conventional RISs, STAR-RISs offer broader coverage capabilities and flexible communication protocols, providing more design freedom in geographic deployment and communication strategies to ensure user fairness. Therefore, utilizing STAR-RISs to enhance user fairness in WPCNs is an interesting and worthwhile topic to explore. However, to the best of our knowledge, there has been no related effort in this area, which motivates our work. As shown in Table I, we summarize the comparison between our work and existing works. It is evident that our study of STAR-RIS assisted WPCNs not only focuses on user fairness as the primary performance metric but also considers a more complex multi-antenna HAP configuration to fully leverage the joint beamforming gains. This complicated consideration is rarely investigated in existing WPCNs.
\begin{table*}[t]
\centering
\fontsize{9.5pt}{10pt}\selectfont
\caption{\textcolor{black}{The comparsion of the previous works and our work}}
\label{T1}
\renewcommand{\arraystretch}{1.5}  
\begin{tabular}{|c|m{1.3cm}<{\centering}|m{1.3cm}<{\centering}|m{1.3cm}<{\centering}|m{1.3cm}<{\centering}|m{1.3cm}<{\centering}|m{1.3cm}<{\centering}|m{1.3cm}<{\centering}|}
	\hline
	& \cite{Chen_CTM,NOMA_fairness} &\cite{Xu_WPCN,Wu_dynamic}&\cite{Zheng_WPCN,Hua_MIMO}&\cite{Wu_NOMA,Song_NOMA}&\cite{Du_STAR_WPCNs,Qin_STAR_MEC}&\cite{Xie_STAR_WPCNs}&\textbf{Our work}\\
	\hline
	{STAR-RIS enabled communication} &$\times$&$\times$&$\times$&$\times$&$\checkmark$&$\checkmark$ &$\checkmark$\\
	\hline
	{Single-antenna HAP} &$\checkmark$&$\checkmark$&$\checkmark$&$\checkmark$&$\checkmark$&$\checkmark$&$\checkmark$\\
	\hline
	{Multi-antenna HAP} &$\times$&$\times$&$\checkmark$&$\times$&$\times$&$\times$&$\checkmark$ \\
	\hline
	{TDMA transmission strategy} &$\times$&$\checkmark$&$\checkmark$&$\times$&$\checkmark$&$\times$&$\checkmark$ \\
	\hline
	{NOMA transmission strategy} &$\checkmark$&$\times$&$\times$&$\checkmark$&$\times$&$\checkmark$&$\checkmark$ \\
	\hline
	{User fairness as a performance metric} &$\checkmark$&$\times$&$\times$&$\times$&$\times$&$\times$&$\checkmark$ \\
	\hline
\end{tabular}
\end{table*}

However, this complex system also introduces new design challenges. Firstly, compared to single-antenna systems, a multi-antenna HAP requires additional transmission and reception beamforming designs to fully leverage its multiplexing gain. Secondly, to accommodate both signal transmission and reflection, STAR-RISs necessitate considering more variables than conventional RISs in the beamforming design. Moreover, these variables are highly-coupled, thus significantly increasing the design complexity. Thirdly, with the incorporation of STAR-RISs, TDMA is enhanced by pairing with TS because both focus on time-based resource allocation, improving interference-free communication efficiency. Conversely, NOMA is enhanced by pairing with ES, as both maximize resource utilization. Therefore, determining ``which communication strategy is better for WPCNs?" becomes an open question to be further explored. To overcome these challenges and obtain an upper bound on user fairness performance, in this paper, we investigate the minimum throughput maximization (MTM) problem for STAR-RIS assisted WPCNs. Specifically, two novel transmission strategies, namely ES-NOMA and TS-TDMA, are proposed. For each strategy, the formulated problem is solved by our proposed efficient algorithms. Based on this, the max-min user throughput is characterized and compared between the two communication strategies to answer the above question. It is suggested that \emph{TS-TDMA and ES-NOMA are the preferred transmission strategies for enhancing user fairness in STAR-RIS assisted WPCNs when the HAP is equipped with single and multiple antennas, respectively.}

It is important to note that, compared to our previous study in \cite{Zhu_conference}, we extend the assumption of a single antenna at the HAP to a multi-antenna configuration, which introduces more flexibility but also new complexities in joint beamforming design. Besides, our STAR-RIS protocol considerations now include TS in addition to ES, increasing communication scheme diversity and facilitating further exploration of STAR-RIS's potential to enhance user fairness in WPCNs. The main contributions of this paper are summarized as follows:
\begin{itemize}
\item We propose a STAR-RIS assisted WPCN, where a STAR-RIS is deployed to support both downlink WPT and uplink WIT between a multi-antenna HAP and two single-antenna users. To effectively eliminate the doubly-near-far effect, we delve into the user fairness performance in the WPCN based on two STAR-RIS operating protocol-driven transmission strategies, namely ES-NOMA and TS-TDMA. In the ES-NOMA strategy, the HAP first transmits energy to the users in the form of broadcasting, and then the users utilize the harvested energy to simulatenously transmit information to the HAP. While in the TS-TDMA strategy, the HAP utilizes orthogonal time slots to minimize interference, whether it is transmitting energy to different users or receiving information from different users. As a result, we formulate the MTM problem based on both ES-NOMA and TS-TDMA strategies.
\item For ES-NOMA, we propose a two-layer iterative algorithm to solve the resulting intractable problem. In the inner layer, we first determine the time allocation for WPT and WIT and then apply the block coordinate descent (BCD) framework to iteratively optimize highly-coupled variables. Particularly, we utilize the semidefinite relaxation (SDR) method and the minimum mean squared error (MMSE) criterion to design the active HAP beamforming for downlink WPT and uplink WIT, respectively. In addition, we apply the penalty-based method to optimize the passive STAR-RIS beamforming. While in the outer layer, we utilize the one-dimensional search to determine the optimal time allocation.
\item For TS-TDMA, we first employ the maximum-ratio transmission (MRT) beamformer to determine the optimal active beamforming and passive beamforming for both downlink WPT and uplink WIT. We then use convex optimization techniques to solve a standard convex problem and obtain the optimal time allocation solution.
\item Our numerical results depict that 1) the deployment of STAR-RISs is able to achieve significant gains for user fairness over conventional RISs in both NOMA-based WPCN and TDMA-based WPCN; 2) when the HAP is equipped with only one antenna, TS-TDMA achieves better performance, but ES-NOMA is more superior as the number of HAP antennas or STAR-RIS elements increases; and 3) deploying STAR-RISs closer to far users is more effective in mitigating the effects of doubly-near-far in WPCNs. 
\end{itemize}

\subsection{Organization and Notations}
The rest of the paper is structured as follows: Section II presents an introduction to the system model and the MTM frameworks for the ES-NOMA and TS-TDMA strategies. In Section III, effective algorithms are explored to solve the resulting intractable problems. Subsequently, Section IV shows the numerical results and the corresponding discussions. 
Finally, the paper concludes with Section V.

\emph{Notations}: Scalars, vectors, and matrices are denoted by lower-case, bold lower-case letters, and bold upper-case letters, respectively. $(\cdot)^H$ denotes the conjugate transpose. $\|\cdot\|$, $\|\cdot\|_2$, $\|\cdot\|_*$, and $\|\cdot\|_F$ denote the norm, spectral norm, nuclear norm, and Frobenius norm, respectively. $\mathrm{Tr}(\cdot)$ and $\mathrm{Rank}(\cdot)$ denote the trace and rank of the matrices. $\mathrm{diag}(\cdot)$ denotes the diagonalization operation on vectors. $\mathbb{C}^{M \times N}$ denotes the space of $M \times N$ complex valued matrices. $\mathbf{I}^{M\times M}$ denotes the unit matrix of order $M$. $\mathcal{CN}(\mu, \sigma^2)$ represents the distribution of a circularly symmetrical complex Gaussian random variable with a mean of $\mu$ and a variance of $\sigma^2$. 
\section{System Model and Problem Formulation}
\begin{figure}
\setlength{\abovecaptionskip}{0cm}   
\setlength{\belowcaptionskip}{0cm}   
\setlength{\textfloatsep}{7pt}
\centering
\includegraphics[width=3.6in]{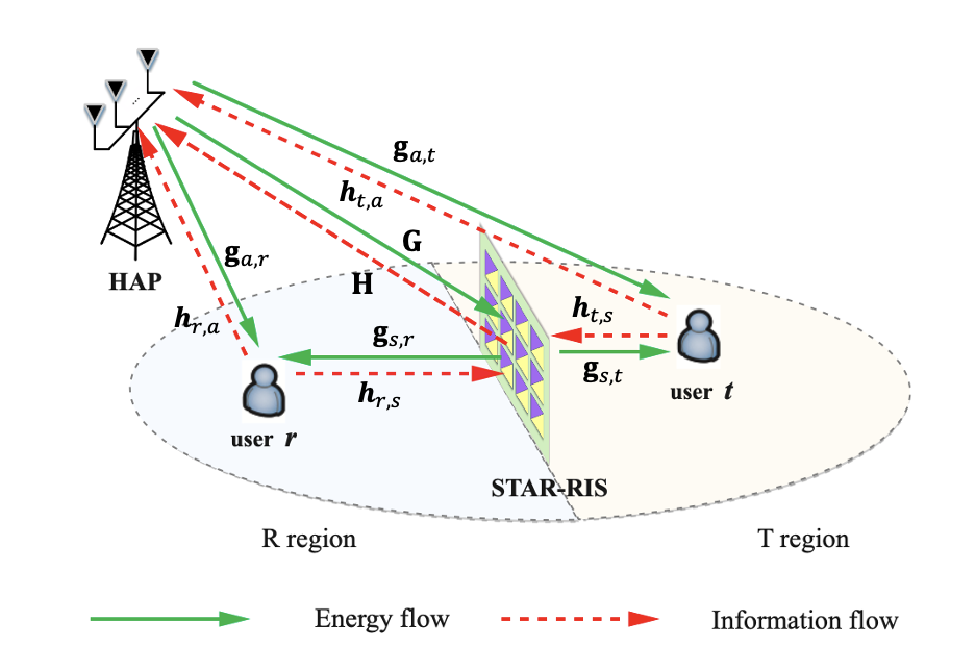}
\caption{Illustration of a SATR-RIS assisted WPCN.}
\label{fig:graph}
\end{figure}
We consider a STAR-RIS assisted WPCN, where a STAR-RIS consisting of $M$ elements is deployed to assist the downlink WPT and uplink WIT between a $N$-antenna HAP and multiple single-antenna users. In practice, these users are energy-limited IoT devices that have to first harvest energy from the HAP in order to supply subsequent information uploads. With the introduction of the STAR-RIS, the full communication space is divided into the transmission (T) and the reflection (R) regions. Among them, users in the R region are located on the same side of the STAR-RIS as the HAP, thus tending to have better channel conditions compared to users in the T region. Following this, to gain fundamental design insights from the STAR-RIS for eliminating the doubly-near-far effect in WPCNs, in the remainder of the paper we consider a basic setup, i.e., there is only one user $t$ and one user $r$ in the T region and R region, respectively, as depicted in Fig. 1. Moreover, we assume all channels follow the narrow-band quasi-static fading model. The coefficients of downlink WPT channels from the HAP to the STAR-RIS, from the HAP to user $t$, from the HAP to user $r$, from the STAR-RIS to user $t$, and from the STAR-RIS to user $r$ are denoted by $\mathbf{G}\in \mathbb{C}^{M\times N}$, $\mathbf{g}_{a,t}\in \mathbb{C}^{N\times 1}$, $\mathbf{g}_{a,r}\in \mathbb{C}^{N\times 1}$, $\mathbf{g}_{s,t}\in\mathbb{C}^{M\times 1}$, and $\mathbf{g}_{s,r}\in\mathbb{C}^{M\times 1}$, respectively. In contrast, the counterpart uplink WIT channel coefficients are denoted by $\mathbf{H}\in \mathbb{C}^{N\times M}$, $\mathbf{h}_{t,a}\in \mathbb{C}^{1\times N}$, $\mathbf{h}_{r,a}\in \mathbb{C}^{1\times N}$, $\mathbf{h}_{t,s}\in\mathbb{C}^{1\times M}$, and $\mathbf{h}_{r,s}\in\mathbb{C}^{1\times M}$, respectively. Consider that the channel estimation methods proposed in \cite{Xu_estimation} and \cite{Wu_estimation} can be utilized for efficient estimation of the downlink WPT channels and uplink WIT channels, respectively. Therefore, to investigate the maximum performance potential of the STAR-RIS assisted WPCNs, we make the assumption that the channel state information of all links is perfectly known at the HAP \cite{Wu_dynamic}.
\subsection{STAR-RIS Protocols and Models}
In this work, the STAR-RIS ES and TS protocols are applied to NOMA-based WPCN and TDMA-based WPCN, respectively.

For the ES protocol, the incident signal is divided into transmitted and reflected signals in the form of energy splitting. The amplitude adjustments of the $m$-th element for transmission and reflection are denoted as $\beta^t_m$ and $\beta^r_m$. Due to the law of conservation of energy, the constraints $\beta^t_m, \beta^r_m \in [0,1]$ and $\beta^t_m+\beta^r_m=1, \forall m \in\mathcal{M}=\{1,2,\cdots,M\}$ always hold. In addition, for each element of the STAR-RIS, there are two independent phase shifts for transmission and reflection, which are denoted by $\theta^t_m$ and $\theta^r_m$, respectively. Similar to conventional RISs, we assume the design of the phase shifts is continuous, i.e., $\theta^t_m, \theta^r_m \in [0,2\pi), \forall m\in\mathcal{M}.$ Thus, the transmission- and reflection-coefficient matrices of the STAR-RIS can be given by $\mathbf{\Theta}^{\textup{ES}}_t=\textup{diag}\left(\sqrt{\beta_1^{t}}e^{j\theta^t_1},\cdots,\sqrt{\beta_M^{t}}e^{j\theta^t_M}\right)$ and $\mathbf{\Theta}^{\textup{ES}}_r=\textup{diag}\left(\sqrt{\beta_1^{r}}e^{j\theta^r_1},\cdots,\sqrt{\beta_M^{r}}e^{j\theta^r_M}\right)$, respectively.

For the TS protocol, all elements of the STAR-RIS operate simultaneously in T mode or R mode at different orthogonal time durations to achieve transmission or reflection of the incident signal. Thus, the transmission- and reflection-coefficient matrices of the STAR-RIS are similar to the conventional RISs, and given by $\mathbf{\Theta}_t^{\textup{TS}}=\textup{diag}\left(e^{j\theta^t_1},\cdots,e^{j\theta^t_M}\right)$ and $\mathbf{\Theta}_r^{\textup{TS}}=\textup{diag}\left(e^{j\theta^r_1},\cdots,e^{j\theta^r_M}\right)$, respectively. In addition, let $\lambda_t$ and $\lambda_r$ denote the time allocation for T and R modes, respectively. The constraints $\lambda_t, \lambda_r \in[0,1]$ and $\lambda_t+\lambda_r=1$ hold throughout the communication period.
\subsection{Signal Transmission Models}
\begin{figure}
\setlength{\abovecaptionskip}{0cm}   
\setlength{\belowcaptionskip}{0.2cm}   
\setlength{\textfloatsep}{7pt}
\centering
\subfigure[ES-NOMA strategy]{
	\includegraphics[width=3.3in]{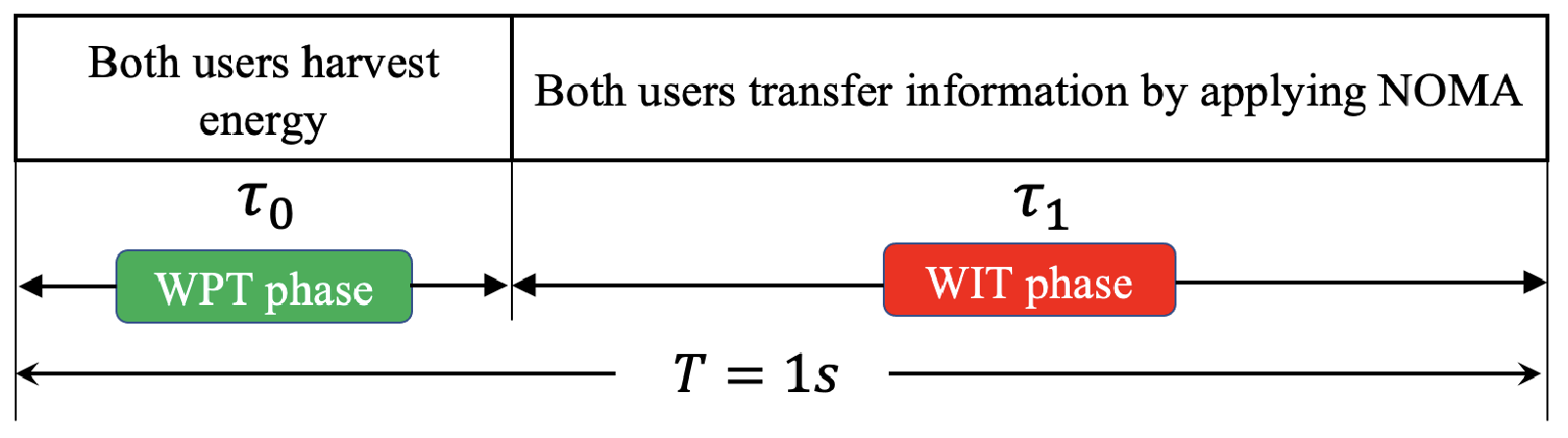}}
\subfigure[TS-TDMA strategy]{
	\includegraphics[width=3.3in]{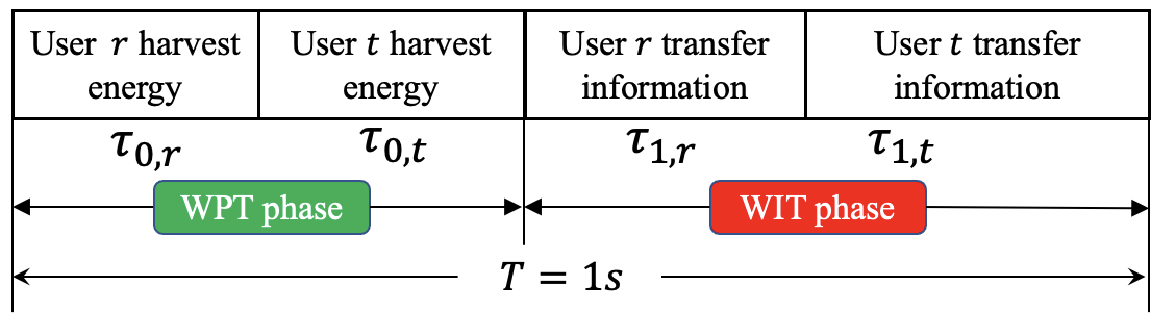}}
\caption{Illustration of transmission block structure.}
\end{figure}
According to the HTT protocol, the mission period $T$\footnote{In this paper, we consider a normalized unit transmission block time in the sequel, i.e., $T = 1 s$, to measure the system performance. Thus, the terms “power” and “energy” are interchangeable.} of each transmission block in WPCN contains both WIT and WPT phases. In the first phase, users harvest energy from the HAP using the allocated time. In the second phase, users will rely on the harvested energy to complete the information upload within the allocated time. In this paper, we investigate two specific transmission strategies, namely ES-NOMA and TS-TDMA, for the system. Furthermore, an uplink-adaptive dynamic beamforming framework is considered in the STAR-RIS \cite{Wu_dynamic}. To facilitate distinction, we denote the STAR-RIS beamforming matrices in the WPT and WIT phases as $\mathbf{\Theta}_k^{\textup{ES/TS}(1)}$ and $\mathbf{\Theta}_k^{\textup{ES/TS}(2)}, k\in\{t,r\}$, respectively. The following are explained in details:

\emph{1) Transmission Block Structure for ES-NOMA}: As illustrated in Fig. 2(a) for the ES-NOMA strategy, user $t$ and user $r$ simultaneously harvest energy from the HAP during the allocated time $\tau_0$. Let $\mathbf{v}^{\textup{ES}}_t\in \mathbb{C}^{N\times 1}$ and $\mathbf{v}^{\textup{ES}}_r\in \mathbb{C}^{N\times 1}$ denote the dedicated energy beams assigned for user $t$ and user $r$, respectively. Considering the transmit power budget $P_A$ at the HAP, we can have
\begin{align}
\|\mathbf{v}^{\textup{ES}}_t\|^2+\|\mathbf{v}^{\textup{ES}}_r\|^2\leq P_A.
\end{align} 
Since the noise power is considerably lower than the power received from the HAP, we assume that the interference and noise at users can be negligible \cite{Zhu_SWIPT}. Besides, the energy harvesting levels considered in this system can be completely covered by the linear conversion region generated by multi-parallel EH circuits \cite{Ma_NL}. Therefore, following the linear EH model \cite{Chen_MEC,XU_large},\footnote{By adopting a nonlinear EH model, our optimization framework remains applicable, but it requires further modifications. These adjustments will be the focus of our future work.
} the received RF energy at user $k$ is denoted as $E^{\textup{ES}}_k$, shown at the top of the next page, where $\eta$ indicates energy conversion efficiency.
\begin{figure*}[t]
	\normalsize
	\begin{align}	
		E^{\textup{ES}}_t=\tau_0\eta\left(|\mathbf{g}^H_{s,t}\mathbf{\Theta}^{\textup{ES(1)}}_t\mathbf{G}\mathbf{v}^{\textup{ES}}_t+\mathbf{g}^H_{a,t}\mathbf{v}^{\textup{ES}}_t|^2+|\mathbf{g}^H_{s,t}\mathbf{\Theta}^{\textup{ES(1)}}_t\mathbf{G}\mathbf{v}^{\textup{\textup{ES}}}_t+\mathbf{g}^H_{a,t}\mathbf{v}^{\textup{ES}}_t|^2\right),\\
		E^{\textup{ES}}_r=\tau_0\eta\left(|\mathbf{g}^H_{s,r}\mathbf{\Theta}^{\textup{ES(1)}}_r\mathbf{G}\mathbf{v}^{\textup{ES}}_r+\mathbf{g}^H_{a,r}\mathbf{v}^{\textup{ES}}_r|^2+|\mathbf{g}^H_{s,r}\mathbf{\Theta}^{\textup{ES(1)}}_r\mathbf{G}\mathbf{v}^{\textup{ES}}_t+\mathbf{g}^H_{a,r}\mathbf{v}^{\textup{ES}}_t|^2\right).
	\end{align}
	\hrulefill\vspace*{0pt}
\end{figure*}
\begin{remark}
	\textup{For the ES-NOMA strategy, since the energy beams do not carry any information and STAR-RIS may split their energy to both T and R regions, all energy beams will be received as broadcasts by user $t$ and user $r$ simultaneously. In this case, only one energy beam is needed. Therefore, instead of $\mathbf{v}^{\textup{ES}}_t$ and $\mathbf{v}^{\textup{ES}}_t$, we will consider $\mathbf{v}^{\textup{ES}}$ (where $\mathbf{v}^{\textup{ES}}=\mathbf{v}^{\textup{ES}}_t+\mathbf{v}^{\textup{ES}}_r$) as the combined energy beam in the subsequent study.}
\end{remark}
Let $\mathbf{G}_k=[\mathrm{diag}(\mathbf{g}^H_{s,k})\mathbf{G};\mathbf{g}^H_{a,k}] \in \mathbb{C}^{(M+1)\times N}, k\in\{t,r\}$ denote the combined channel from the HAP to user $k$. $\widetilde{\mathbf{u}}^{\textup{ES}}_k=[\mathbf{u}^{\textup{ES}}_k;1]\in\mathbb{C}^{(M+1)\times1}$, where $\mathbf{u}^{\textup{ES}}_{k}=[\sqrt{\beta_1^{k(1)}}e^{j\theta^{k(1)}_1},\!\cdots\!,\sqrt{\beta_M^{k(1)}}e^{j\theta^{k(1)}_M}]^T, {k}\in\{t,r\}$. Then, the harvested energy for user $k$ can be further written as
\begin{align}
	E^{\textup{ES}}_k=\tau_0\eta|(\widetilde{\mathbf{u}}^{\textup{ES}}_k)^H\!\mathbf{G}_k\mathbf{v}^{\textup{ES}}|^2,k\in\{t,r\}.
\end{align}

After energy harvesting, power-domain NOMA is adopted for the uplink WIT. Accordingly, throughout the duration of $\tau_1$, user $t$ and user $r$ simultaneously transmit their respective information to the HAP at a transmit power $p_k$ of $\tau_1 p_k \leq E^{\textup{ES}}_k, k\in\{t,r\}$. As a result, the received signal at the HAP is given by
\begin{align}
	\mathbf{y}=\sum_{k\in\{t,r\}}\left(\mathbf{h}^H_{k,a}+\mathbf{H}\mathbf{\Theta}^{\textup{ES}(2)}_k\mathbf{h}^H_{k,s}\right)\sqrt{p_k}x_k+\mathbf{n}_{0},
\end{align}
where $x_k$ denotes the transmit signal of user $k$ with $x_k \sim \mathcal{CN}(0,1)$. $\mathbf{n}_0 \sim \mathcal{CN}(0,\sigma^2\mathbf{I})$ denotes the additive white Gaussian noise at the HAP. Let $\mathbf{H}_k=[\mathbf{H}\mathrm{diag}(\mathbf{h}^H_{k,s}),\mathbf{h}^H_{k,a}]\in\mathbb{C}^{N\times(M+1)}, k\in\{t,r\}$ denote the combined channels from user $k$ to the HAP, and $\widetilde{\mathbf{q}}^{\textup{ES}}_k=[\mathbf{q}^{\textup{ES}}_k;1]\in\mathbb{C}^{(M+1)\times 1}$, where $\mathbf{q}^{\textup{ES}}_{k}=[\sqrt{\beta_1^{k(2)}}e^{j\theta^{k(2)}_1},\cdots,\sqrt{\beta_M^{k(2)}}e^{j\theta^{k(2)}_M}]^T, {k}\in\{t,r\}$. As a result, the received signal is further expressed as 
\begin{align}
	\mathbf{y}=\mathbf{H}_t\widetilde{\mathbf{q}}^{\textup{ES}}_t\sqrt{p_t}x_t+\mathbf{H}_r\widetilde{\mathbf{q}}^{\textup{ES}}_r\sqrt{p_r}x_r+\mathbf{n}_0.
\end{align}
Here, we consider that a linear beamforming vector $\mathbf{w}^{\textup{ES}}_k \in \mathbb{C}^{N\times 1}$ is deployed to receive the superimposed signals \cite{Zhang_uplink}. Then, the estimated signal is shown as
\begin{align}
	r_k=\left(\mathbf{w}_k^{\textup{ES}}\right)^H\mathbf{y}, \forall k \in \{t,r\}.
\end{align} 

In addition, to mitigate inter-user interference, successive interference cancellation (SIC) is performed at the HAP. More particularly, this paper supposes that the HAP first decodes the message of user $r$, while the message of user $t$ is treated as noise.\footnote{Intuitively, user $r$ tends to be closer to the HAP, so it has higher direct link conditions. Decoding its message first helps improve system performance. Therefore, for the sake of simplicity, the investigation of decoding order is not the focus here but will be presented in future work.} Then the message of user $r$ will be subtracted from the received composite signal, and the message of user $t$ is decoded independently. As a consequence, the achievable data rate of user $k$ can be stated as
\begin{gather}
	R^{\textup{ES}}_t=\tau_1\log_2\left(1+\frac{p_t|(\mathbf{w}^{\textup{ES}}_t)^H\mathbf{H}_t\widetilde{\mathbf{q}}^{\textup{ES}}_t|^2}{\|\mathbf{w}^{\textup{ES}}_t\|^2\sigma^2}\right),\\
	R^{\textup{ES}}_r\!=\!\tau_1\log_2\left(1+\frac{p_r|(\mathbf{w}^{\textup{ES}}_r)^H\mathbf{H}_r\widetilde{\mathbf{q}}^{\textup{ES}}_r|^2}{p_t|(\mathbf{w}^{\textup{ES}}_r)^H\mathbf{H}_t\widetilde{\mathbf{q}}^{\textup{ES}}_t|^2+\|\mathbf{w}^{\textup{ES}}_r\|^2\sigma^2}\right).
\end{gather}

\emph{2) Transmission Block Structure for TS-TDMA}: As shown in Fig. 2(b), the whole transmission block for TS-TDMA is also divided into two parts, i.e., $\tau_{0,k}$ for WPT and $\tau_{1,k}$ for WIT, as in the ES-NOMA strategy. The difference is that due to the time switching characteristics of the STAR-RIS when employing the TS protocol, each time block is further allocated to user $t$ and user $r$ based on $\tau_{0,t}+\tau_{0,r}+\tau_{1,t}+\tau_{1,r}=T$. Then, by denoting the dedicated energy beams for user $t$ and user $r$ as $\mathbf{v}^{\textup{TS}}_t$ and $\mathbf{v}^{\textup{TS}}_r$, respectively, the harvested energy for user $k$ is expressed as
\begin{align}
	E^{\textup{TS}}_t=\tau_{0,t}\eta|(\widetilde{\mathbf{u}}^{\textup{TS}}_t)^H\mathbf{G}_t\mathbf{v}_t^{\textup{TS}}|^2, \\
	E^{\textup{TS}}_r=\tau_{0,r}\eta|(\widetilde{\mathbf{u}}^{\textup{TS}}_r)^H\mathbf{G}_r\mathbf{v}_r^{\textup{TS}}|^2,
\end{align}
where $\widetilde{\mathbf{u}}^{\textup{TS}}_k=[\mathbf{u}_k^{\textup{TS}};1]\in\mathbb{C}^{(M+1)\times 1}$,   $\mathbf{u}_{k}^{\textup{TS}}=[e^{j\theta^{k(1)}_1},\cdots,e^{j\theta^{k(1)}_M}]^T, {k}\in\{t,r\}$. It should be noted that although user $t$ can also harvest energy from $\mathbf{v}_r^{\textup{TS}}$ during $\tau_{0,r}$, it can only receive signal from the direct link since the STAR-RIS is operating in R mode during this period. Consequently, the harvested energy during this time is at a very low level compared to that harvested during $\tau_{0,t}$ for user $t$. Therefore, we ignore this part of the energy to reduce the design complexity of the energy beams as in \cite{Xie_STAR_WPCNs,Du_STAR_WPCNs}. The same applies to user $r$. Further, since TDMA is adopted for the uplink WIT, the HAP experiences no inter-user interference. As a consequence, the achievable data rate from user $k$ at the HAP is shown as 
\begin{align}
	R^{\textup{TS}}_t=\tau_{1,t}\log_2\left(1+\frac{p_t|(\mathbf{w}^{\textup{TS}}_t)^H\mathbf{H}_t\widetilde{\mathbf{q}}^{\textup{TS}}_t|^2}{\|\mathbf{w}^{\textup{TS}}_t\|^2\sigma^2}\right),\\
	R^{\textup{TS}}_r=\tau_{1,r}\log_2\left(1+\frac{p_r|(\mathbf{w}^{\textup{TS}}_r)^H\mathbf{H}_r\widetilde{\mathbf{q}}^{\textup{TS}}_r|^2}{\|\mathbf{w}^{\textup{TS}}_r\|^2\sigma^2}\right),
\end{align}
where $\widetilde{\mathbf{q}}^{\textup{TS}}_k=[\mathbf{q}_k^{\textup{TS}};1]\in\mathbb{C}^{(m+1)\times 1}$, $\mathbf{q}_{k}^{\textup{TS}}=[e^{j\theta^{k(2)}_1},\cdots,e^{j\theta^{k(2)}_M}]^T, {k}\in\{t,r\}$. 

\subsection{Problem Formulation}
To enhance user fairness, our goal is to investigate the MTM problem between user $t$ and user $r$ by jointly optimizing the time allocation, user transmit power, active transmit and receive beamforming at the HAP, as well as the downlink/uplink passive STAR-RIS beamforming. Taking into consideration the different transmission strategies, the corresponding MTM problems are formulated in the following manner:

\emph{1) The MTM Problem for ES-NOMA}:
\begin{subequations}\label{ES_CTM}
	\begin{align}
		&\label{P1_C0}\  \max_{{\{\mathbf{v^{\textup{ES}}}\},\{\mathbf{w}^{\textup{ES}}_k\},\{\mathbf{u}_{k}^{\textup{ES}}\},\{\mathbf{q}_{k}^{\textup{ES}}\},\{p_k\},\tau_0,\tau_1}} \min_{k\in\{t,r\}} R^{\textup{ES}}_k\\ 
		&\label{P1_C1} {\rm s.t.} \ \left\|\mathbf{v}^{\textup{ES}}\right\|^2 \leq P_A,\\
		&\label{P1_C2} \quad \ \ \left\|\mathbf{w}^{\textup{ES}}_k\right\|^2=1, k\in\{t,r\}, \\
		&\label{P1_C3} \quad \ \ \tau_0>0, \tau_1>0, p_k>0, k \in \{t,r\}, \\
		&\label{P1_C4} \quad \ \  \tau_0+\tau_1=T, \\
		&\label{P1_C5} \quad \ \ \tau_1 p_k \leq E^{\textup{ES}}_k, k \in \{t,r\},\\
		&\label{P1_C6} \quad \ \ \textcolor{black}{ \mathbf{u}^{\textup{ES}}_k\!=\![\sqrt{\beta_1^{k(1)}}e^{j\theta^{k(1)}_1},\cdots\!,\sqrt{\beta_M^{k(1)}}e^{j\theta^{k(1)}_M}]^T, k \in \{t,r\},}\\,
		&\label{P1_C7} \quad \ \ \textcolor{black}{ \mathbf{q}^{\textup{ES}}_{k}\!=\![\sqrt{\beta_1^{k(2)}}e^{j\theta^{k(2)}_1},\cdots\!,\sqrt{\beta_M^{k(2)}}e^{j\theta^{k(2)}_M}]^T,k \in \{t,r\},} \\
		&\label{P1_C8} \quad \ \ \theta^{k(i)}_m \in [0,2\pi),\forall m \in \mathcal{M}, k\in\{t,r\}, i\in\{1,2\},\\
		&\quad\ \ \beta_m^{t(i)}, \beta_m^{r(i)}\in [0,1],\beta_{m}^{t(i)}+\beta_{m}^{r(i)}=1, \forall m\in \mathcal{M}, \nonumber \\ 
		&\label{P1_C9}\quad\ \ i\in\{1,2\},
	\end{align}
\end{subequations}
where constraint \eqref{P1_C1} denotes the HAP transmit power budget in the downlink WPT, while \eqref{P1_C2} represents the normalization constraint for HAP receive beamforming in the uplink WIT. Constraints \eqref{P1_C3} and \eqref{P1_C4} ensure the feasibility of user transmit power and time allocation. Additionally, constraint \eqref{P1_C5} ensures the causality of energy harvesting and consumption between WPT and WIT. \eqref{P1_C6} and \eqref{P1_C7} represent the STAR-RIS tuning vectors in downlink WPT and uplink WIT, respectively. Moreover, \eqref{P1_C8} and \eqref{P1_C9} denote the phase shift and amplitude constraints on the STAR-RIS beamforming design in both transmission phases.

\emph{2) The MTM Problem for TS-TDMA}:
\begin{subequations}\label{TS_CTM}
	\begin{align}
		&\label{P2_C0}\  \max_{{\{\mathbf{v_k^{\textup{TS}}}\},\{\mathbf{w}^{\textup{TS}}_k\},\{\mathbf{u}_{s}^{\textup{TS}}\},\{\mathbf{q}_{s}^{\textup{TS}}\},\{p_k\},\{\tau_{0,k}\},\{\tau_{1,k}\}}} \min_{k\in\{t,r\}} R^{\textup{TS}}_k\\ 
		&\label{P2_C1} \ \ {\rm s.t.} \ \left\|\mathbf{v}_k^{\textup{TS}}\right\|^2\leq P_A, k\in\{t,r\}, \\
		&\label{P2_C2} \quad \quad \ \left\|\mathbf{w}^{\textup{TS}}_k\right\|^2=1, k\in\{t,r\}, \\
		&\label{P2_C3} \quad \quad \ \tau_{0,k}>0, \tau_{1,k}>0, p_k>0, k \in \{t,r\}, \\
		&\label{P2_C4} \quad \quad \  \tau_{0,t}+\tau_{0,r}+\tau_{1,t}+\tau_{1,r}=T, \\
		&\label{P2_C5} \quad \quad \ \tau_{1,k} p_k \leq E^{\textup{TS}}_k, k \in \{t,r\},\\
		&\label{P2_C6} \quad \quad \ \textcolor{black}{\mathbf{u}_{k}^{\textup{TS}}=[e^{j\theta^{k(1)}_1},\cdots,e^{j\theta^{k(1)}_M}]^H, {k}\in\{t,r\},}\\
		&\label{P2_C7} \quad \quad \ \textcolor{black}{ \mathbf{q}_{k}^{\textup{TS}}=[e^{j\theta^{k(2)}_1},\cdots,e^{j\theta^{k(2)}_M}]^H, {k}\in\{t,r\},}\\
		&\label{P2_C8} \quad \quad \ \theta^{k(i)}_m \in [0,2\pi),\forall m \in \mathcal{M}, k\in\{t,r\}, i\in\{1,2\}.
	\end{align}
\end{subequations}
Similarly, \eqref{P2_C1} and \eqref{P2_C2} denote the HAP beamforming constraints in the TS-TDMA communication strategy. \eqref{P2_C3}-\eqref{P2_C5} ensure the validity of the strategy. Besides, \eqref{P2_C6} and \eqref{P2_C7} represent the STAR-RIS tuning vectors in downlink WPT and uplink WIT. Unlike ES-NOMA, the full space coverage of STAR-RIS is realized in TS-TDMA by time switching. Thus, the STAR-RIS beamforming design is only limited by the phase shift constraint \eqref{P2_C8}.

As can be observed, since all optimization variables are tightly coupled in the objective functions and constraints, both problems $\textup{(\ref{ES_CTM})}$ and $\textup{(\ref{TS_CTM})}$ are highly non-convex in their current forms and are challenging to solve directly. To get around these obstacles, we shall examine effective algorithms for solving these intractable problems in the next section.

\section{Proposed Solutions to the MTM Problems}
In order to tackle the MTM problem based on two different transmission strategies, we first propose an iterative algorithm for the ES-NOMA strategy based on a two-layer framework. Then, for the TS-TDMA strategy, we employ the MRT beamformer to determine the optimal active beamforming and decompose the remaining variables into two blocks, i.e., STAR-RIS passive beamforming and time allocation, which can be solved in a relatively straightforward manner.
\begin{figure*}[t]
	\normalsize
	\begin{subequations}
		\begin{gather}
			\label{Rate1}R^{\textup{ES}}_t=(1-\overline{\tau})\log_2\left(1+\frac{p_t\mathrm{Tr}\left(\mathbf{W}^{\textup{ES}}_t\mathbf{H}_t\widetilde{\mathbf{Q}}^{\textup{ES}}_t\mathbf{H}^H_t\right)}{\sigma^2}\right)\\
			\label{Rate2}R^{\textup{ES}}_r=(1-\overline{\tau})\log_2\left(1+\frac{p_r\mathrm{Tr}\left(\mathbf{W}^{\textup{ES}}_r\mathbf{H}_r\widetilde{\mathbf{Q}}^{\textup{ES}}_r\mathbf{H}^H_r\right)}{p_t\mathrm{Tr}\left(\mathbf{W}^{\textup{ES}}_r\mathbf{H}_t\widetilde{\mathbf{Q}}^{\textup{ES}}_t\mathbf{H}^H_t\right)+\sigma^2}\right)
		\end{gather}
	\end{subequations}
	\hrulefill \vspace*{0pt}
\end{figure*}
\subsection{Proposed Solution for ES-NOMA}
To start with, we propose a two-layer framework to optimize all variables for ES-NOMA. In the inner layer, we initially establish time allocation for $\tau_0$ and $\tau_1$, followed by solving a simplified problem for $\textup{(\ref{ES_CTM})}$ to determine the optimal solution for the remaining variables. 
While in the outer layer, the optimal solutions for $\tau_0$ and $\tau_1$ are solved via one-dimensional search. Now, we focus on the inner layer design. For any given $\tau_0=\overline{\tau}$ and $\tau_1=1-\overline{\tau}$, we define $\mathbf{V}^{\textup{ES}}=\mathbf{v}^{\textup{ES}}(\mathbf{v}^{\textup{ES}})^H \in\mathbb{C}^{N\times N}$, $\mathbf{W}^{\textup{ES}}_k=\mathbf{w}^{\textup{ES}}_k(\mathbf{w}^{\textup{ES}}_k)^H \in\mathbb{C}^{N\times N}$, $\widetilde{\mathbf{Q}}^{\textup{ES}}_k\triangleq\widetilde{\mathbf{q}}^{\textup{ES}}_k(\widetilde{\mathbf{q}}^{\textup{ES}}_k)^H \in\mathbb{C}^{(M+1)\times (M+1)}$, and $\widetilde{\mathbf{U}}^{\textup{ES}}_k\triangleq\widetilde{\mathbf{u}}^{\textup{ES}}_k(\widetilde{\mathbf{u}}^{\textup{ES}}_k)^H\in\mathbb{C}^{(M+1)\times (M+1)}$. As such, the achievable data rate for user $t$ and user $r$ can be further expressed as \eqref{Rate1} and \eqref{Rate2}, presented at the top of the page.
Next, by introducing an auxiliary variable $\Gamma$ with $R^{\textup{ES}}_k \geq \Gamma, \forall k\in\{t,r\}$, problem $\textup{(\ref{ES_CTM})}$ is reformulated into an equivalent semi-definite program (SDP) form as follows:
\begin{subequations}\label{ES_SDP}
	\begin{align}
		&\label{P3_C0}\  \max_{{\{\mathbf{V}^{\textup{ES}}\},\{\mathbf{W}^{\textup{ES}}_k\},\{\widetilde{\mathbf{U}}_{k}^{\textup{ES}}\},\{\widetilde{\mathbf{Q}}_{k}^{\textup{ES}}\},\{p_k\}}} \Gamma \\
		&\label{P3_C1}\ {\rm s.t.} \ (1\!-\!\overline{\tau})\log_2\!\left(\!1+\frac{p_t\!\mathrm{Tr}\!\left(\!\mathbf{W}^{\textup{ES}}_t\mathbf{H}_t\widetilde{\mathbf{Q}}^{\textup{ES}}_t\mathbf{H}^H_t\!\right)}{\sigma^2}\!\right) \geq \Gamma,\\
		&\label{P3_C2}\quad \quad (1\!-\!\overline{\tau})\!\log_2\!\!\left(\!1\!+\!\frac{p_r\!\mathrm{Tr}\!\left(\!\mathbf{W}^{\textup{ES}}_r\mathbf{H}_r\widetilde{\mathbf{Q}}^{\textup{ES}}_r\mathbf{H}^H_r\!\right)}{p_t\!\mathrm{Tr}\!\left(\!\mathbf{W}^{\textup{ES}}_r\mathbf{H}_t\widetilde{\mathbf{Q}}^{\textup{ES}}_t\mathbf{H}^H_t\!\right)\!\!+\!\sigma^2}\!\right)\!\!\geq\! \! \Gamma,\\
		&\label{P3_C3}\quad \quad \mathrm{Tr}\left(\mathbf{V}^{\textup{ES}}\right) \leq P_A,\\
		&\label{P3_C4}\quad \quad(1-\overline{\tau})p_k\! \leq \! \overline{\tau}\mathrm{Tr}\left(\!\mathbf{V^{\textup{ES}}}\mathbf{G}^H_k\widetilde{\mathbf{U}}^{\textup{ES}}_k\mathbf{G}_k\!\right), k\!\in\!\{t,r\},\\
		&\label{P3_C5}\quad\quad \mathbf{V}^{\textup{ES}} \succeq0, \mathrm{Rank}\left(\mathbf{V}^{\textup{ES}}\right)=1, \\
		&\label{P3_C6}\quad \quad \mathbf{W}^{\textup{ES}}_k\succeq 0, \mathrm{Rank}\left(\mathbf{W}^{\textup{ES}}_k\right)=1, k\in\{t,r\}, \\
		&\label{P3_C7}\quad\quad  \widetilde{\mathbf{U}}^{\textup{ES}}_k\succeq 0, \mathrm{Rank}\left(\widetilde{\mathbf{U}}^{\textup{ES}}_k\right)=1, \forall k\in\{t,r\}, \\
		&\label{P3_C8}\quad\quad  \widetilde{\mathbf{Q}}^{\textup{ES}}_k\succeq 0, \mathrm{Rank}\left(\widetilde{\mathbf{Q}}^{\textup{ES}}_k\right)=1, \forall k\in\{t,r\}, \\
		&\label{P3_C9}\quad\quad [\widetilde{\mathbf{U}}^{\textup{ES}}_t]_{m,m}+[\widetilde{\mathbf{U}}^{\textup{ES}}_r]_{m,m}=1,\forall m\in\mathcal{M},\\ 
		&\label{P3_C10}\quad\quad [\widetilde{\mathbf{Q}}^{\textup{ES}}_t]_{m,m}+[\widetilde{\mathbf{Q}}^{\textup{ES}}_r]_{m,m}=1,\forall m\in\mathcal{M},\\ 
		&\label{P3_C11}\quad \quad [\widetilde{\mathbf{U}}^{\textup{ES}}_k]_{M+1,M+1}\!\!=\!\!1,[\widetilde{\mathbf{Q}}^{\textup{ES}}_k]_{M+1,M+1}\!=\!1, \!k\!\in\!\{t,r\},\\
		&\label{P3_C12}\quad \quad p_k>0, \forall k \in\{t,r\}.
	\end{align}
\end{subequations}
Now, the main obstacle to solving this problem \eqref{ES_SDP} falls on the high coupling of user transmit power $\{p_k\}$ and uplink active/passive beamforming design $\{\mathbf{W}^{\textup{ES}}_k\}$ and $\{\widetilde{\mathbf{Q}}^{\textup{ES}}_k\}$ in constraints \eqref{P3_C1} and \eqref{P3_C2}. To overcome this issue, we first employ the BCD method to divide all variables into four blocks, i.e., $\{\mathbf{W}^{\textup{ES}}_k\}$, $\{p_k,\mathbf{V^{\textup{ES}}}\}$, $\{p_k, \widetilde{\mathbf{U}}_k^{\textup{ES}}\}$ and $\{\widetilde{\mathbf{Q}}^{\textup{ES}}_k\}$, then solve their corresponding subproblems iteratively. The following are the particulars:

\emph{1) Optimization for $\{\mathbf{W}^{\textup{ES}}_k\}$ With Fixed Other Variables:} As mentioned in \cite{Wang_MMSE}, the linear MMSE receiver is the optimal criterion for uplink NOMA. Inspired by this, for any given $\{\widetilde{\mathbf{Q}}^{\textup{ES}}_k\}$, we can determine the optimal solutions of $\mathbf{w}_k^{\textup{ES}}$ with constraint \eqref{P1_C2} as follows:
\begin{subequations}\label{ES_Receive}
	\begin{gather}
		\mathbf{w}^{\textup{ES}*}_t=\frac{\left(\widetilde{\mathbf{h}}_t\widetilde{\mathbf{h}}_t^H+\frac{\sigma^2}{p_t}\mathbf{I}_N\right)^{-1}\widetilde{\mathbf{h}}_t}{\bigg|\bigg|\left(\widetilde{\mathbf{h}}_t\widetilde{\mathbf{h}}_t^H+\frac{\sigma^2}{p_t}\mathbf{I}_N\right)^{-1}\widetilde{\mathbf{h}}_t\bigg|\bigg|},\\
		\mathbf{w}^{\textup{ES}*}_r=\frac{\left(\widetilde{\mathbf{h}}_r\widetilde{\mathbf{h}}_r^H+\frac{p_t}{p_r}\widetilde{\mathbf{h}}_t\widetilde{\mathbf{h}}_t^H+\frac{\sigma^2}{p_r}\mathbf{I}_N\right)^{-1}\widetilde{\mathbf{h}}_r}{\bigg|\bigg|\left(\widetilde{\mathbf{h}}_r\widetilde{\mathbf{h}}_r^H+\frac{p_t}{p_r}\widetilde{\mathbf{h}}_t\widetilde{\mathbf{h}}_t^H+\frac{\sigma^2}{p_r}\mathbf{I}_N\right)^{-1}\widetilde{\mathbf{h}}_r\bigg|\bigg|},
	\end{gather}
\end{subequations}
where $\widetilde{\mathbf{h}}_k=\mathbf{H}_k\widetilde{\mathbf{q}}^{\textup{ES}}_k \in \mathbb{C}^{N\times 1}, k\in\{t,r\}.$ Relying on this, we can further obtain the close-form solution of the $\mathbf{W}_k$ as $\mathbf{W}^{\textup{ES}*}_k=\mathbf{w}^{\textup{ES}*}_k(\mathbf{w}^{\textup{ES}*}_k)^H$.

\emph{2) Optimization for $\{p_k,\mathbf{V^{\textup{ES}}}\}$ with Fixed Other Variables}: For any given $\{\mathbf{W}^{\textup{ES}}_k,\widetilde{\mathbf{U}}_k^{\textup{ES}},\widetilde{\mathbf{Q}}^{\textup{ES}}_k\}$, problem \eqref{ES_SDP} is reduced to 
\begin{subequations}\label{ES_transmit1}
	\begin{align}
		&\quad \quad  \max_{{\{\mathbf{V}^{\textup{ES}}\},\{p_k\}}} \Gamma \\
		&\ {\rm s.t.} \ \eqref{P3_C1},\eqref{P3_C2},\eqref{P3_C3},\eqref{P3_C4},\eqref{P3_C5}, \eqref{P3_C12}.
	\end{align}
\end{subequations}
However, problem \eqref{ES_transmit1} is still intractable due to the non-convexity of constraints \eqref{P3_C2} and \eqref{P3_C5}. To this end, we further employ the successive convex approximation (SCA) technique to tackle constraint \eqref{P3_C2}. Specially, according to the first-order Taylor expansion of $R^{\textup{ES}}_r$ with respect to $p_t$, we can derive the lower boundary of $R^{\textup{ES}}_r$ as 
\begin{align}
	R^{\textup{ES}}_r\geq(1-\overline{\tau})\left(\log_2(X)-\log_2(Y)-R\right) \triangleq \widehat{R}_r^{\textup{ES}(l)},
\end{align}
where $X\!=\!p_r\mathrm{Tr}\!\left(\!\mathbf{W}^{\textup{ES}}_r\mathbf{H}_r\widetilde{\mathbf{Q}}^{\textup{ES}}_r\mathbf{H}^H_r\! \right)+p_t\mathrm{Tr}\!\left(\!\mathbf{W}^{\textup{ES}}_r\mathbf{H}_t\widetilde{\mathbf{Q}}^{\textup{ES}}_t\mathbf{H}^H_t\!\right)\!+\!\sigma^2$, $Y=p_t\mathrm{Tr}\left(\mathbf{W}^{\textup{ES}}_r\mathbf{H}_t\widetilde{\mathbf{Q}}^{\textup{ES}}_t\mathbf{H}^H_t\right)\!+\!\sigma^2 $, and $R\!=\!\frac{\mathrm{Tr}\left(\mathbf{W}^{\textup{ES}}_r\mathbf{H}_t\widetilde{\mathbf{Q}}^{\textup{ES}}_t\mathbf{H}^H_t\right)}{\textup{ln}2\left(p_t\mathrm{Tr}\left(\mathbf{W}^{\textup{ES}}_r\mathbf{H}_t\widetilde{\mathbf{Q}}^{\textup{ES}}_t\mathbf{H}^H_t\right)\!+\!\sigma^2\right)}(p_t\!-\!p^{(l)}_t)$. $l$ denotes the number of iterations. As for the non-convex rank-one constraint \eqref{P3_C5}, the SDR method is adopted to relax the original problem. Through the above conversion, the problem \eqref{ES_transmit1} is approximated as 
\begin{subequations}\label{ES_Transmit}
	\begin{align}
		& \max_{\{{\mathbf{V}^{\textup{ES}}\},\{p_k\}}} \Gamma \\
		& \ {\rm s.t.} \ \widehat{R}_r^{\textup{ES}(l)} \geq \Gamma, \\
		&\quad \quad \mathbf{V}^{\textup{ES}} \succeq 0, \forall k \in\{r,t\}, \\
		& \quad \quad \eqref{P3_C1}, \eqref{P3_C3}, \eqref{P3_C4}, \eqref{P3_C12}.
	\end{align}
\end{subequations}
It can be observed that the reformulated problem \eqref{ES_Transmit} is a standard SDP, which can be efficiently solved via existing convex solvers such as CVX \cite{cvx}. Actually, the following theorem reveals that the relaxed problem \eqref{ES_Transmit} has the same optimal solution as the original problem. Therefore, we can obtain the solutions to the original problem by directly solving the relaxed SDP problem.
\begin{theorem}
	For any given $P_A>0$, the optimal solutions for the simplified problem \eqref{ES_Transmit} always satisfy $\mathrm{Rank}(\mathbf{V^{\textup{ES}*}})=1$.
\end{theorem}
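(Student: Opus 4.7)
The plan is to carry out a KKT analysis of the convex SDP \eqref{ES_Transmit}. Slater's condition holds---pick, for instance, $\mathbf{V}^{\textup{ES}}=(P_A/(2N))\mathbf{I}_N$ together with strictly positive but sufficiently small $p_t,p_r$---so strong duality is in force and the KKT conditions are necessary and sufficient for optimality. I would associate multipliers $\mu\ge 0$ with the power budget \eqref{P3_C3}, $\lambda_t,\lambda_r\ge 0$ with the energy-causality constraints \eqref{P3_C4}, and $\mathbf{Y}\succeq\mathbf{0}$ with the PSD constraint on $\mathbf{V}^{\textup{ES}}$. Differentiating the Lagrangian in $\mathbf{V}^{\textup{ES}}$ then yields the stationarity identity
\begin{equation*}
\mathbf{Y}=\mu\mathbf{I}_N-\overline{\tau}\sum_{k\in\{t,r\}}\lambda_k\,\mathbf{G}_k^H\widetilde{\mathbf{U}}_k^{\textup{ES}}\mathbf{G}_k,
\end{equation*}
together with complementary slackness $\mathbf{Y}\mathbf{V}^{\textup{ES}*}=\mathbf{0}$; this identity is the backbone of the proof.

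Building on it, I would first show $\mu>0$. Assume for contradiction that $\mu=0$; then the stationarity identity expresses $\mathbf{Y}$ as the negative of a PSD matrix, and combined with $\mathbf{Y}\succeq\mathbf{0}$ this forces $\mathbf{Y}=\mathbf{0}$ and each $\lambda_k\mathbf{G}_k^H\widetilde{\mathbf{U}}_k^{\textup{ES}}\mathbf{G}_k=\mathbf{0}$. For non-degenerate channels this leaves $\lambda_t=\lambda_r=0$, so by complementary slackness the energy-causality constraints \eqref{P3_C4} are slack at the optimum. But then each $p_k$ could be strictly increased, raising $\Gamma$ through \eqref{P3_C1} and the SCA lower bound $\widehat{R}_r^{\textup{ES}(l)}$, which contradicts optimality. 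Hence $\mu>0$, the power budget binds with $\mathrm{Tr}(\mathbf{V}^{\textup{ES}*})=P_A>0$, and in particular $\mathbf{V}^{\textup{ES}*}\neq\mathbf{0}$. Next, right-multiplying the stationarity identity by $\mathbf{V}^{\textup{ES}*}$ and using $\mathbf{Y}\mathbf{V}^{\textup{ES}*}=\mathbf{0}$ gives $\mu\mathbf{V}^{\textup{ES}*}=\mathbf{A}\mathbf{V}^{\textup{ES}*}$, where $\mathbf{A}:=\overline{\tau}\sum_k\lambda_k\mathbf{G}_k^H\widetilde{\mathbf{U}}_k^{\textup{ES}}\mathbf{G}_k\succeq\mathbf{0}$. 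Thus every column of $\mathbf{V}^{\textup{ES}*}$ is an eigenvector of $\mathbf{A}$ with eigenvalue $\mu$. The PSD requirement $\mu\mathbf{I}_N-\mathbf{A}\succeq\mathbf{0}$ forces $\mu\ge\lambda_{\max}(\mathbf{A})$, and since $\mathbf{V}^{\textup{ES}*}\neq\mathbf{0}$ the inequality is tight, so the range of $\mathbf{V}^{\textup{ES}*}$ sits inside the top eigenspace of $\mathbf{A}$. Because this top eigenspace is generically one-dimensional, I would conclude $\mathrm{Rank}(\mathbf{V}^{\textup{ES}*})\le 1$, and combined with $\mathbf{V}^{\textup{ES}*}\neq\mathbf{0}$ this yields the claimed $\mathrm{Rank}(\mathbf{V}^{\textup{ES}*})=1$.

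The main obstacle I anticipate is making the simplicity of the top eigenvalue of $\mathbf{A}$ fully rigorous: because $\mathbf{A}$ depends on the unknown dual multipliers $\lambda_k$ as well as on the fixed matrices $\widetilde{\mathbf{U}}_k^{\textup{ES}}$ and the combined channels $\mathbf{G}_k$, a deterministic conclusion would require either a perturbation/genericity argument on the channels or, alternatively, an explicit rank-one reconstruction $\tilde{\mathbf{V}}=P_A\mathbf{u}\mathbf{u}^H$ with $\mathbf{u}$ in the top eigenspace of $\mathbf{A}$ together with a verification that the causality constraints \eqref{P3_C4} remain satisfied by $\tilde{\mathbf{V}}$. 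Everything else---Slater, stationarity, complementary slackness, and $\mu>0$---is a routine computation once the Lagrangian is written down.
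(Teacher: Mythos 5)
Your proposal takes a genuinely different route from the paper's own proof. The paper does write down the Lagrangian and the dual of \eqref{ES_Transmit}, but it never runs your stationarity/complementary-slackness eigenspace argument; instead, after noting that the relaxed problem and its dual are feasible and bounded and that the power budget must be tight at the optimum (so $\mathbf{V}^{\textup{ES}*}\neq\mathbf{0}$), it invokes the low-rank existence theorem for SDPs (Theorem 3.2 of \cite{Rank_proof}, the Shapiro--Barvinok--Pataki-type bound): there is an optimal solution whose rank $r$ satisfies $r^2\le C_n$, where $C_n=3$ is the number of linear constraints involving $\mathbf{V}^{\textup{ES}}$ in \eqref{ES_Transmit} (the power budget and the two energy-causality constraints). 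Since $r\le\sqrt{3}<2$ and $\mathbf{V}^{\textup{ES}*}\neq\mathbf{0}$, rank one follows. That argument is deterministic and channel-independent, which is exactly what your approach lacks at its final step; on the other hand, your KKT route, if completed, would characterize the optimal $\mathbf{V}^{\textup{ES}*}$ structurally (its range lies in the top eigenspace of $\mathbf{A}$) and would apply to \emph{all} optimizers rather than merely asserting that a rank-one optimizer \emph{exists}, which is strictly what the cited theorem delivers.

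The gap you flag is genuine and is the one step your route cannot close deterministically. Because each $\widetilde{\mathbf{U}}_k^{\textup{ES}}$ is (at most) rank one, $\mathbf{A}=\overline{\tau}\sum_{k}\lambda_k\mathbf{G}_k^H\widetilde{\mathbf{U}}_k^{\textup{ES}}\mathbf{G}_k$ is a nonnegative combination of two rank-one PSD matrices, so your argument only guarantees that the range of $\mathbf{V}^{\textup{ES}*}$ lies in an eigenspace of dimension at most two; in the degenerate case where the two components contribute equal top eigenvalues you only get $\mathrm{Rank}(\mathbf{V}^{\textup{ES}*})\le 2$, and ``generically one-dimensional'' is not a proof. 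Your proposed repair via $\tilde{\mathbf{V}}=P_A\mathbf{u}\mathbf{u}^H$ with $\mathbf{u}$ in the top eigenspace also needs care: an arbitrary such $\mathbf{u}$ need not satisfy \emph{both} causality constraints \eqref{P3_C4} simultaneously, so one must select the rank-one solution by a reduction step that preserves all three linear constraints --- which is precisely the content of the theorem the paper cites. A secondary slip: from $\lambda_t=\lambda_r=0$ you conclude via complementary slackness that \eqref{P3_C4} is slack, but complementary slackness does not give that direction (a constraint can be active with a zero multiplier); the cleaner way to get $\mu>0$ is the paper's direct monotonicity argument that $\mathrm{Tr}(\mathbf{V}^{\textup{ES}})=P_A$ must hold at any optimum, since scaling $\mathbf{V}^{\textup{ES}}$ up relaxes \eqref{P3_C4} and permits larger $p_k$ and hence a larger objective.
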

\begin{proof}
	Please refer to the Appendix.
\end{proof}
\emph{3) Optimization for $\{p_k, \widetilde{\mathbf{U}}_k^{\textup{ES}}\}$ With Fixed Other Variables}: 
In this subproblem, we jointly design $\{\widetilde{\mathbf{U}}^{\textup{ES}}_s\}$ and $\{p_k\}$ with fixed $\{\mathbf{W}^{\textup{ES}}_k,\mathbf{V}^{\textup{ES}},\widetilde{\mathbf{Q}}^{\textup{ES}}_k\}$. Accordingly, the original \eqref{ES_SDP} is transformed into a simplified form as follows: 
\begin{subequations}\label{ES_passive1}
	\begin{align}
		&\quad \max_{\{\widetilde{\mathbf{U}}^{\textup{ES}}_s\},\{p_k\}} \Gamma\\ 
		&\ {\rm s.t.}\eqref{P3_C1},\eqref{P3_C2},\eqref{P3_C4},\eqref{P3_C7},\eqref{P3_C9},\eqref{P3_C11}, \eqref{P3_C12}.
	\end{align}
\end{subequations}
It can be found that the main obstacles to the solution of the problem \eqref{ES_passive1} lie in the non-convex constraints \eqref{P3_C2} and \eqref{P3_C7}, where the former can be handled using a SCA technique similar to the previous subproblem. Thus, we focus primarily our efforts on the latter. To handle this rank-one constraint, we first rewrite it into an equivalent but tractable form as follows \cite{Yu_RANK}:
\begin{align}\label{Rank_one}
	\mathrm{Rank}(\widetilde{\mathbf{U}}^{\textup{ES}}_k)\!=1\! \Leftrightarrow \|\widetilde{\mathbf{U}}^{\textup{ES}}_k\|_{*}\!-\!\|\widetilde{\mathbf{U}}^{\textup{ES}}_k\|_{2}=0, \forall k \in\{t,r\},
\end{align}
where $\|\widetilde{\mathbf{U}}^{\textup{ES}}_k\|_{*}=\sum_i \sigma_i(\widetilde{\mathbf{U}}^{\textup{ES}}_k)$ and  $\|\widetilde{\mathbf{U}}^{\textup{ES}}_k\|_{2}=\sigma_1(\widetilde{\mathbf{U}}^{\textup{ES}}_k)$, and $\sigma_i$ is the $i$-th largest singular value of $\widetilde{\mathbf{U}}^{\textup{ES}}_k$. Next, by employing the penalty method \cite{penalty}, we add the right hand side of \eqref{Rank_one} as a penalty term to the original objective function as follows:
\begin{subequations}\label{ES_passive2}
	\begin{align}
		&\  \max_{\{\widetilde{\mathbf{U}}^{\textup{ES}}_s\},\{p_k\}}\; \Gamma-\xi\!\!\sum_{k\in\{t,r\}}\left(\|\widetilde{\mathbf{U}}^{\textup{ES}}_k\|_{*}-\|\widetilde{\mathbf{U}}^{\textup{ES}}_k\|_{2}\right)\\ 
		&\quad \quad {\rm s.t.} \  \ \widehat{R}_r^{\textup{ES}(l)} \geq \Gamma, \\
		& \quad \quad \quad \quad \widetilde{\mathbf{U}}^{\textup{ES}}_k \succeq 0, \forall k \in\{r,t\}, \\
		&\quad \quad \quad \quad \eqref{P3_C1},\eqref{P3_C4},\eqref{P3_C9},\eqref{P3_C11},\eqref{P3_C12},
	\end{align}
\end{subequations}
where $\xi>0$ denotes the penalty factor. To ensure the validity of the original optimization goal, $\xi$ should be initialized with a small value and then gradually increased until the following convergence criterion is met:
\begin{align}
	\max\{\|\widetilde{\mathbf{U}}^{\textup{ES}}_k\|_{*}-\|\widetilde{\mathbf{U}}^{\textup{ES}}_k\|_{2}, k\in\{t,r\}\} \leq \epsilon_1,
\end{align} 
where $\epsilon_1$ is a predefined threshold for constraining the iteration. However, this reformulated problem is still difficult to solve directly due to the non-convexity of the penalty term. To this end, we further apply the SCA technique to tackle this issue. With the aid of the first-order Taylor expansion, the non-convex penalty term can be approximated by its convex upper boundary as \eqref{Rank_one_SCA}, shown at the top of this page, 
\begin{figure*}[!t]
	\normalsize
	\begin{equation}\label{Rank_one_SCA}
		\begin{aligned}
			\|\widetilde{\mathbf{U}}^{\textup{ES}}_k\|_{*}\!\!-\!\|\widetilde{\mathbf{U}}^{\textup{ES}}_k\|_{2}
			&\leq \|\widetilde{\mathbf{U}}^{\textup{ES}}_k\|_{*}\!\!-\!\left\{\!\|\widetilde{\mathbf{U}}^{\textup{ES}(l)}_k\|_2\!+\!\mathrm{Tr}\!\left[\varphi \left(\widetilde{\mathbf{U}}^{\textup{ES}(l)}_k\right)\varphi \left(\widetilde{\mathbf{U}}^{\textup{ES}(l)}_k\right)^{H}\!\!\left(\widetilde{\mathbf{U}}^{\textup{ES}}_k\!\!-\! \widetilde{\mathbf{U}}_k^{\textup{ES}(l)}\right)\right]\!\right\} \\
			&\triangleq \|\widetilde{\mathbf{U}}^{\textup{ES}}_k\|_{*}-\overline{\mathbf{U}}^{\textup{ES}(l)}_k,
		\end{aligned}
	\end{equation}
	\hrulefill \vspace*{0pt}
\end{figure*}
where $\varphi\left(\widetilde{\mathbf{U}}^{\textup{ES}(l)}_s\right)$ denotes the eigenvector related to the largest eigenvalue of $\widetilde{\mathbf{U}}^{\textup{ES}(l)}_s$. Driven by this, problem \eqref{ES_passive2} is approximated to the following optimization problem:
\begin{align}\label{ES_passive_downlink}
	&\  \max_{\{\widetilde{\mathbf{U}}^{\textup{ES}}_s\},\{p_k\}}\; \Gamma-\xi\!\!\sum_{k\in\{t,r\}}\left(\|\widetilde{\mathbf{U}}^{\textup{ES}}_k\|_{*}-\overline{\mathbf{U}}^{\textup{ES}(l)}_k\right)\\ 
	&\quad {\rm s.t.} \quad \eqref{P3_C1},\eqref{P3_C4},\eqref{P3_C9},\eqref{P3_C11},\eqref{P3_C12}.
\end{align}
It can also be efficiently solved by CVX.

\emph{4) Optimization for $\{\widetilde{\mathbf{Q}}^{\textup{ES}}_k\}$ with Fixed Other Variables}: For any given $\{\mathbf{W}^{\textup{ES}}_k,\!\mathbf{V}^{\textup{ES}},\!\widetilde{\mathbf{U}}^{\textup{ES}}_k,p_k\}$, original problem is reformulated as 
\begin{subequations}\label{ES_passive3}
	\begin{align}
		&\quad \max_{\{\widetilde{\mathbf{Q}}^{\textup{ES}}_s\}}\ \Gamma\\ 
		&\ {\rm s.t.} \ \eqref{P3_C1},\eqref{P3_C2},\eqref{P3_C8},\eqref{P3_C10},\eqref{P3_C11}.
	\end{align}
\end{subequations}
As can be observed, the differential form of \eqref{P3_C2} with respect to $\widetilde{\mathbf{Q}}^{\textup{ES}}_t$ and $\widetilde{\mathbf{Q}}^{\textup{ES}}_r$ renders the problem still unsolvable. In this instance, we first introduce two auxiliary variables $A$ and $B$ with $\frac{1}{A}\leq\mathrm{Tr}\left(\mathbf{W}^{\textup{ES}}_r\mathbf{H}_r\widetilde{\mathbf{Q}}^{\textup{ES}}_r\mathbf{H}^H_r\right)$ and $B\geq \mathrm{Tr}\left(\mathbf{W}^{\textup{ES}}_r\mathbf{H}_t\widetilde{\mathbf{Q}}^{\textup{ES}}_t\mathbf{H}^H_t\right)+\sigma^2$, respectively. Therefore, constraint \eqref{P3_C2} can be rewritten as 
\begin{align}
	\left(1-\overline{\tau}\right)\log_2\left(1+\frac{1}{AB}\right)\geq \Gamma,
\end{align}
Next, using the first Taylor expansion of $\log_2\left(1+\frac{1}{AB}\right)$ with respect to $A$ and $B$, we may obtain its convex lower boundary as follows:
\begin{align}
	\log_2(1+\frac{1}{AB}) &\!\geq \log_2(1+\frac{1}{A^{(l)}B^{(l)}})-\frac{\log_2(e)\left(A-A^{(l)}\right)}{A^{(l)}\left(1+A^{(l)}B^{(l)}\right)} \nonumber\\
	&\!- \frac{\log_2(e)\left(B-B^{(l)}\right)}{B^{(l)}\left(1+A^{(l)}B^{(l)}\right)} \triangleq \widetilde{R}^{\textup{ES}(l)}_r/(1-\overline{\tau}),
\end{align}
where $A^{(l)}$ and $B^{(l)}$ denote the local points of $A$ and $B$ in the $l$-th iteration, respectively. As for the non-convex rank-one constraint in \eqref{P3_C8}, similar penalty method can be utilized. Consequently, problem \eqref{ES_passive3} is modified to:
\begin{subequations}\label{ES_passive_uplink}
	\begin{align}
		&\quad \max_{A,B,\{\widetilde{\mathbf{Q}}^{\textup{ES}}_k\}}  \ \Gamma-\xi_1\!\!\sum_{k\in\{t,r\}}\left(\|\widetilde{\mathbf{Q}}^{\textup{ES}}_k\|_{*}-\overline{\mathbf{Q}}^{\textup{ES}(l)}_k\right) \\
		&\quad \quad {\rm s.t.} \ \ \widetilde{R}_r^{\textup{ES}(l)}\geq \Gamma,\\
		&\quad \quad \quad \quad \frac{1}{A}\leq\mathrm{Tr}\left(\mathbf{W}^{\textup{ES}}_r\mathbf{H}_r\widetilde{\mathbf{Q}}^{\textup{ES}}_r\mathbf{H}^H_r\right),\\
		&\quad \quad\quad \quad B \geq \mathrm{Tr}\left(\mathbf{W}^{\textup{ES}}_r\mathbf{H}_t\widetilde{\mathbf{Q}}^{\textup{ES}}_t\mathbf{H}^H_t\right)+\sigma^2,\\
		&\quad \quad\quad \quad \widetilde{\mathbf{Q}}^{\textup{ES}}_k \succeq 0, \forall k \in \{t,r\},\\
		&\quad \quad\quad \quad \eqref{P3_C1},\eqref{P3_C10},\eqref{P3_C11}.
	\end{align}
\end{subequations}
This problem is also a standard SDP, so it can be solved directly by CVX.

Based on the above analysis, the specifics of the developed two-layer algorithm for solving the MTM problem associated with the ES-NOMA strategy are provided in \textbf{Algorithm 1}. To be more specific, on the one hand, the objective value of the original problem \eqref{ES_SDP} with an upper bound is updated as problems \eqref{ES_Receive}, \eqref{ES_Transmit}, \eqref{ES_passive_downlink}, and \eqref{ES_passive_uplink} are solved alternately, which is non-decreasing. On the other hand, the optimal time allocation is always determined after several one-dimensional searches. Therefore, the convergence of the two-layer iterative \textbf{Algorithm 1} is guaranteed.

\begin{remark}
	\textup{The \textbf{Algorithm 1} can also be extended to the K-user scenario, where $K>2$. In the downlink WPT phase, the HAP propagates energy in a broadcast form, which is unaffected by the number of users. Therefore, the design of the downlink joint beamforming can still be completed by directly solving problems (21) and (27). For the uplink WIT phase, we may first sort the channel conditions for all users and then derive the close-form solution of HAP receive beamforming for each user according to the MMSE criterion. Subsequently, we update the $R^{\textup{ES}}_k$ expression for each user according to the uplink NOMA decoding order and incorporate it into (32) to solve the problem.}
\end{remark}
\begin{algorithm}[!t]\label{method1}
	\caption{Proposed two-layer iterative algorithm to solve problem \eqref{ES_CTM}.}
	\label{alg:A}
	\begin{algorithmic}[1]
		\STATE {Initialize the variables $\{\widetilde{\mathbf{Q}}_k^{\textup{ES(0)}},\widetilde{\mathbf{U}}_k^{\textup{ES(0)}}, p^{(0)}_k\}$, $\tau_0=0$, $ \varepsilon_0$, and the search step $\triangle\tau$.}
		\REPEAT  
		\STATE {Set the time allocation $\tau_0=\tau_0+\triangle\tau$ and the iteration number $l=1$} 
		\REPEAT 
		\STATE { For given $\{\widetilde{\mathbf{Q}}^{\textup{ES}(l-1)}\}$, determine the $\{\mathbf{W}^{\textup{ES}(l)}_k\}$ according to  (\ref{ES_Receive})}.
		\STATE {For given $\{\mathbf{W}^{\textup{ES}(l)},\widetilde{\mathbf{Q}}^{\textup{ES}(l-1)},\widetilde{\mathbf{U}}^{\textup{ES}(l-1)}\}$, solve problem (\ref{ES_Transmit}), update $\{\mathbf{V}_k^{\textup{ES}(l)}\}$ and $\{p^{(l)}_k\}$}.
		\STATE {For given $\{\mathbf{W}^{\textup{ES}(l)},\widetilde{\mathbf{Q}}^{\textup{ES}(l-1)},{\mathbf{V}}_k^{\textup{ES}(l)}\}$, solve problem (\ref{ES_passive_downlink}), update $\{\widetilde{\mathbf{U}}^{\textup{ES}(l)}\}$ and $\{p^{(l)}_k\}$}.\\
		\STATE {For given $\{\mathbf{W}^{\textup{ES}(l)},{\mathbf{V}}_k^{\textup{ES}(l)},\widetilde{\mathbf{U}}^{\textup{ES}(l)},p^{(l)}_k\}$, solve problem (\ref{ES_passive_uplink}), update $\{\widetilde{\mathbf{Q}}^{\textup{ES}(l)}\}$}.\\
		\STATE Update $l \leftarrow l+1$.\\
		\UNTIL the iteration yield is below $\varepsilon_0$.\\
		\UNTIL $\tau_0>1$.
		\STATE {Output} the optimal solutions
		$\{\tau^*_0\}$= $\textup{argmax}_{\tau_0}\Gamma$, $\mathbf{V}^{\textup{ES}*}=\mathbf{V}^{\textup{ES}(l)}$, $\mathbf{W}^{\textup{ES}*}_k=\mathbf{W}^{\textup{ES}(l)}_k$, $\!\{p^{*}_k\}\!=\!\{p^{l}_k\}$, $\{\widetilde{\mathbf{U}}_k^{\textup{ES}*}\}=\{\widetilde{\mathbf{U}}_k^{\textup{ES}(l)}\}$ and $\{\widetilde{\mathbf{Q}}_k^{\textup{ES}*}\}=\{\widetilde{\mathbf{Q}}_k^{\textup{ES}(l)}\}$ under $\tau^{*}_0$.
	\end{algorithmic}
\end{algorithm}

\subsection{Proposed Solution for TS-TDMA}
It should be noticed that due to the time switching feature, only one user is served by the STAR-RIS, for any given transmission/reflection-beamforming vector. Thus, the optimal active beamforming vectors at the HAP are the MRT beamformers, i.e., $\mathbf{v}^{\textup{TS}*}_k=\sqrt{P_k}\frac{\mathbf{G}^H_k\widetilde{\mathbf{u}}^{\textup{TS}}_s}{\|\mathbf{G}^H_k\widetilde{\mathbf{u}}^{\textup{TS}}_s\|_2}$ and $\mathbf{w}^{\textup{TS}*}_k=\frac{\mathbf{H}^H_k\widetilde{\mathbf{q}}^{\textup{TS}}_s}{\|\mathbf{H}^H_k\widetilde{\mathbf{q}}^{\textup{TS}}_s\|_2}$ for downlink transmit beamforming and uplink receive beamforming, respectively, where $P_k$ denotes the HAP transmit power allocated to user $k$. Besides, since no inter-user interference exists in the TS-TDMA transmission strategy, both users will use up their harvested energy. Therefore, by substituting $\{\mathbf{v}^{\textup{TS}*}_k\}$ and $\{\mathbf{w}^{\textup{TS}*}_k\}$ into problem \eqref{TS_CTM}, the optimal passive beamforming for both downlink $\{\widetilde{\mathbf{u}}^{\textup{TS}}_k\}$ and uplink $\{\widetilde{\mathbf{q}}^{\textup{TS}}_k\}$ are given by
\begin{subequations}\label{TS_passive}
	\begin{align}
		& \ \ \widetilde{\mathbf{u}}^{\textup{TS}*}_k= \textup{argmax} \ (\widetilde{\mathbf{u}}^{\textup{TS}}_k)^H\mathbf{G}_k\mathbf{G}^H_k\widetilde{\mathbf{u}}^{\textup{TS}}_k, \forall k \in \{t,r\}, \\
		&\ \ \widetilde{\mathbf{q}}^{\textup{TS}*}_k= \textup{argmax} \ (\widetilde{\mathbf{q}}^{\textup{TS}}_k)^H\mathbf{H}^H_k\mathbf{H}_k\widetilde{\mathbf{q}}^{\textup{TS}}_k, \forall k \in \{t,r\},\\
		&{\rm s.t.} \ | [\widetilde{\mathbf{u}}^{\textup{TS}}_k]_m \| =1,  \ | [\widetilde{\mathbf{q}}^{\textup{TS}}_k]_m|=1, m=1,\cdots,M+1.
	\end{align}
\end{subequations}
After obtaining the optimal active/passive beamforming for both downlink WPT and uplink WIT, problem is reduced to a resource allocation problem as 
\begin{subequations}\label{TS_resource}
	\begin{align}
		&\  \max_{{\{P_k\},\{\tau_{0,k}\},\{\tau_{1,k}\}}} \Gamma \\
		&\label{P4_C1}{\rm s.t.}\ \ \tau_{0,k}>0,\tau_{1,k}>0,P_k>0, \forall k\in\{t,r\},\\
		&\label{P4_C2}\quad \quad \tau_{0,t}+\tau_{0,r}+\tau_{1,t}+\tau_{1,r}=T,\\ 
		&\label{P4_C3}\quad \quad P_k\leq P_A,\\ 
		&\label{P4_C4}\quad \quad \tau_{1,k}\log_2\left(1+\frac{\tau_{0,k}P_k\|\mathbf{G}_k\mathbf{u}^{\textup{TS}*}_k\|^2*\|\mathbf{H}_k\mathbf{q}^{\textup{TS}*}_k\|^2}{\tau_{1,k}\sigma^2}\right) \geq \Gamma, \nonumber\\
		&\quad \quad k\in\{t,r\}. 
	\end{align}
\end{subequations}
As can be seen, the equation in constraint \eqref{P4_C3} always holds when the problem \eqref{TS_resource} is optimal; otherwise, we can continue to increase $P_k$ while keeping other variables constant and obtain a better solution. Furthermore, \eqref{P4_C1} and \eqref{P4_C2} are linear constraints on the time allocation variables, and constraint \eqref{P4_C4} is a jointly convex form for $\tau_{0,k}$ and $\tau_{1,k}, k\in\{t,r\}$. Thus, problem \eqref{TS_resource} is a standard convex problem and can be solved directly by CVX. At this point, the proposed algorithm for the TS-TDMA strategy can be summarized as \textbf{Algorithm 2}. Apparently, when $P_A>0$, problems \eqref{TS_passive} and \eqref{TS_resource} are always feasible and restricted by a finite boundary value. Hence, the proposed \textbf{Algorithm 2} is feasible and convergent.
\begin{remark}
	\textup{The \textbf{Algorithm 2} can be directly extended to the system with $K>2$ users. First, solve problem (33) to obtain the optimal beamforming design for each user in both the WPT and WIT phases. Then, solve problem (34) with more complex time allocation constraints to obtain the optimal resource allocation strategy.}
\end{remark}
\begin{algorithm}[!t]\label{method2}
	\caption{Proposed algorithm to solve \eqref{TS_CTM}.}
	\label{alg:B}
	\begin{algorithmic}[1]
		\STATE {Determine optimal active transmission and reception vectors $\mathbf{v}_k^{\textup{TS}*}$ and $\mathbf{w}_k^{\textup{TS}*}$ with MRT beamformer.} \\
		\STATE {Solve problem \eqref{TS_passive} to obtain the optimal passive transmission/reflection-coefficient vectors $\widetilde{\mathbf{u}}_k^{\textup{TS}*}$ and $\widetilde{\mathbf{q}}_k^{\textup{TS}*}$ for downlink WPT and uplink WIT, respectively}.\\
		\STATE{Solve problem \eqref{TS_resource} with given $\{\widetilde{\mathbf{q}}_k^{\textup{TS}},\widetilde{\mathbf{u}}_k^{\textup{TS}} , k\in\{t,r\}\}$ to obtain the optimal time allocation $\tau^{*}_{0,k}$ and $\tau^{*}_{1,k}$}.
		\STATE {Output} the optimal solutions $\Gamma^*$ with
		$\{\tau^*_{0,k},\tau^*_{1,k}\}$, $\{\mathbf{v}_k^{\textup{TS}*}\}$, $\{\mathbf{w}_k^{\textup{TS}*}\}$, $\{\widetilde{\mathbf{u}}_k^{\textup{TS}*}\}$ and $\{\widetilde{\mathbf{q}}_k^{\textup{TS}*}\}$.
	\end{algorithmic}
\end{algorithm}

\subsection{Computational Complexity Analysis}
On the one hand, for ES-NOMA, with fixed time allocations $\tau_0$ and $\tau_1$ in the outer layer of \textbf{Algorithm 1}, the original problem \eqref{ES_SDP} can be transformed into four simpler subproblems by employing the BCD strategy. Of these, the optimal receive beamforming is determined directly based on MMSE guidelines, while the remaining three subproblems are converted to the standard SDP, which can be solved by the interior point method \cite{SDR}. Thus, the approximate computational complexity of \textbf{Algorithm 1}  is given by $\mathcal{O}^{\textup{ES}}=\mathcal{O}\left(I_\textup{out}I_\textup{in}\left(KN^{3.5}+4M^{3.5}\right)\right)$, where $I_\textup{out}$ and $I_\textup{in}$ denote the number of outer and inner loops, respectively, and $K$ indicates the number of users. On the other hand, for TS-TDMA, the optimal passive beamforming for both downlink and uplink can be determined by employing the suboptimal SDR based method. Then, the computational complexity is $\mathcal{O}\left(4M^{3.5}\right)$. Besides, the computational complexity of the convex problem \eqref{TS_resource} is $\mathcal{O}\left(K^{3.5}\right)$. Thus, the approximate computational complexity of \textbf{Algorithm 2} is given by $\mathcal{O}^{\textup{TS}}=\mathcal{O}\left(4M^{3.5}+K^{3.5}\right)$. It can be noticed that the lower computational complexity of TS-TDMA compared to ES-NOMA can be attributed to the fewer optimization variables and iterations required.

\section{Numerical Results}
In this section, we present numerical results from various perspectives to demonstrate the contribution of STAR-RIS to the fairness performance of WPCNs.
\subsection{Simulation Setup}
As illustrated in Fig. \ref{simulation}, a three-dimensional (3D) coordinate setup is taken into consideration, with the HAP located at $(0,0,2)$ meters and the STAR-RIS deployed at $(10,0,0)$ meters. Users $t$ and users $r$ are randomly distributed in the T and R semicircular regions centered on STAR-RIS with a radius of $r=1$ m, respectively. In addition, the Rician fading model is employed in this paper, where the path loss exponents for the HAP-STAR-RIS and the STAR-RIS-user $k$ links are set to be $\alpha_{a,s}=\alpha_{s,k}=2.2$, while those for the HAP-user $k$ links are set to be $\alpha_{a,k}=3.4$ \cite{Wu_dynamic}. Furthermore, the communication bandwidth and carrier frequency are assumed to be $B=1$ MHz and $f=750$ MHz, respectively. Other parameters are set to be $P_A=5$ W, $\eta=0.8$, $\sigma^2 = -90$ dBm, $\varepsilon_0=\varepsilon_1=10^{-5}$, $\triangle\tau=0.1$s.
\begin{figure}[t]
	\setlength{\abovecaptionskip}{0cm}   
	\setlength{\belowcaptionskip}{0cm}   
	\setlength{\textfloatsep}{7pt}
	\centering
	\includegraphics[width=3.5in]{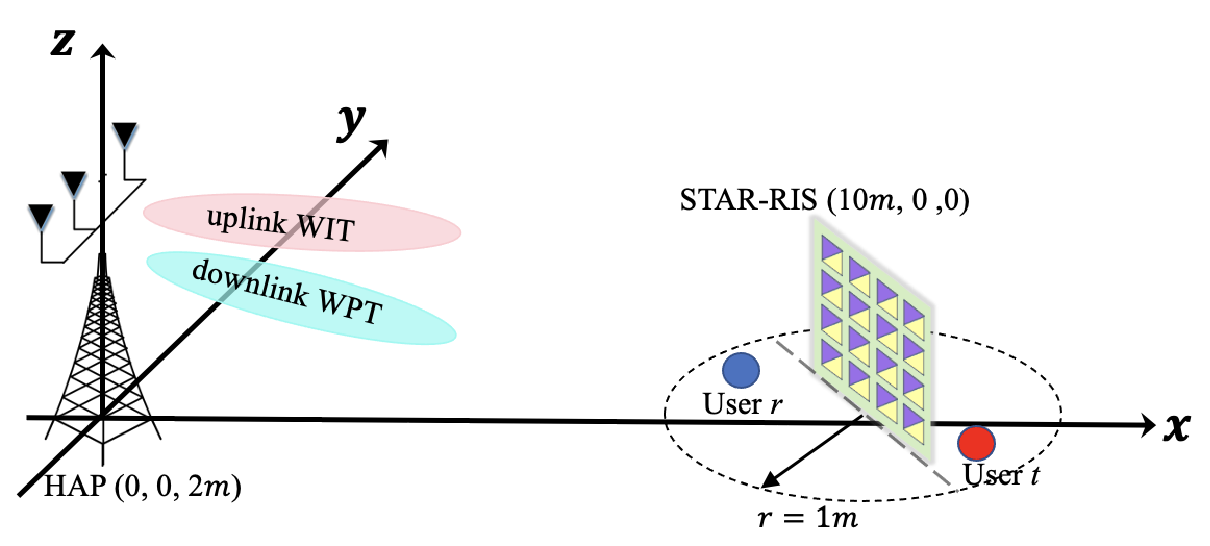}
	\caption{Simulation setup.}
	\label{simulation}
\end{figure}

In this paper, we include the following three baseline schemes for comparison: 1) \textbf{w/o RIS-NOMA/TDMA}: In this case, communication between the HAP and the users is guaranteed by the direct link without the help of RIS. 2) \textbf{RIS-NOMA/TDMA}: In this case, a conventional reflecting-only RIS is employed to facilitate the WPCN. In particular, to fulfill all potential user distribution coverage service requirements, the RIS is deployed at $(10,1,0)$ meters. Note that since no new variables are introduced, our proposed algorithms can be directly applied to the above baseline schemes with appropriate modifications. 3) \textbf{STAR-NOMA, GR}\cite{Wu_Gaussian}: In this case, the Gaussian randomization is applied to tackle the rank-one constraints for both STAR-RIS beamforming $\widetilde{\mathbf{U}}^{\textup{ES}}_k$ and $\widetilde{\mathbf{Q}}^{\textup{ES}}_k$ in the formulated ES-NOMA problem. Besides, the computational complexity of this method is given by $\mathcal{O}^{\textup{GR}}=\mathcal{O}\left(I_{\textup{int}}\left(KN^{3.5}+4M^{3.5}+4MD\right)\right)$, where $I_{\textup{int}}$ denotes the number of iterations required for convergence, and $D$ denotes the number of randomizations, which is typically set to 1000.
\subsection{Convergence of  Proposed Algorithm 1}
\begin{figure}[t]
	\setlength{\abovecaptionskip}{0cm}   
	\setlength{\belowcaptionskip}{0cm}   
	\setlength{\textfloatsep}{7pt}
	\centering
	\includegraphics[width=3.5in]{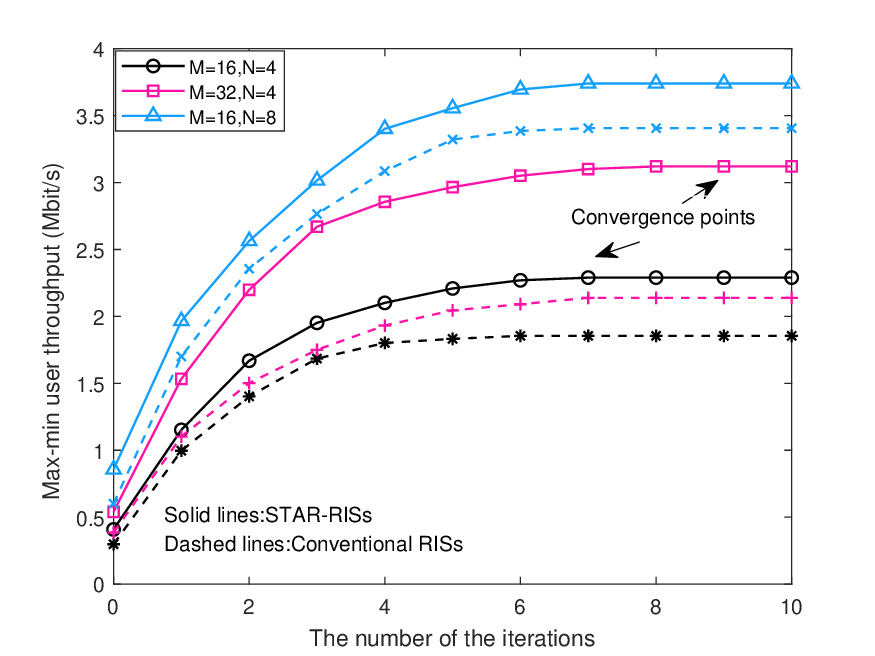}
	\caption{Convergence of Algorithm 1.}
	\label{Convergence}
\end{figure}
In Fig. \ref{Convergence}, we show the convergence behavior of the proposed \textbf{Algorithm 1} for STAR-RISs and conventional RISs under various settings of HAP and STAR-RIS. The results show that for all scenarios, the max-min user throughput rapidly rises with an increase in iterations and eventually stabilizes to a value within 10 iterations. However, with an increase in the number of HAP antennas or STAR-RIS elements, the convergence rate will gradually slow down due to an increase in variables that must be optimized. Furthermore, STAR-RISs require a greater number of iterations to attain convergence points for larger values of $M$ when compared to conventional RISs. This result is unsurprising since the extra DoFs facilitated by STAR-RISs augment the system's design complexity while generating improvements.

\subsection{Fairness Performance in a Single-antenna HAP Case}
\begin{figure}[t]
	\setlength{\abovecaptionskip}{0cm}   
	\setlength{\belowcaptionskip}{0cm}   
	\setlength{\textfloatsep}{7pt}
	\centering
	\includegraphics[width=3.5in]{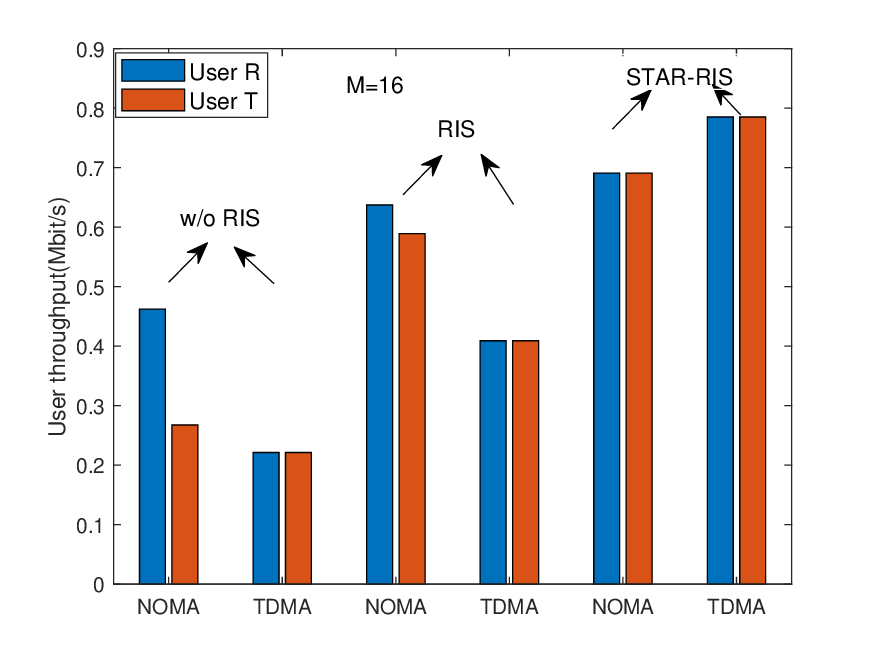}
	\caption{User throughput in a single-antenna HAP case.}
	\label{single-antenna}
\end{figure}
In Fig. \ref{single-antenna}, we study the system fairness performance in a particular single-antenna HAP case. Here, we consider $M$ = 16. The results depict that for baseline schemes, the TDMA strategy bridges the user performance gap better, but the NOMA strategy achieves a better minimum user throughput. This is because the flexible allocation of time resources in TDMA dilutes the impact of channel condition gaps on users, but also results in lower time utilization per user. However, in the case of STAR-RIS, TDMA is superior to NOMA in terms of achievable minimum user throughput. This can be interpreted to mean that the ES protocol enabled by STAR-RIS in the NOMA strategy utilizes a form of energy splitting to promote user fairness. In this case, the single-antenna HAP has difficulty addressing the energy leakage and interference management issues in NOMA transmission, thus curbing individual user performance.
\subsection{Max-min Throughput Versus Number of HAP Antennas}
\begin{figure}[t]
	\setlength{\abovecaptionskip}{0cm}   
	\setlength{\belowcaptionskip}{0cm}   
	\setlength{\textfloatsep}{7pt}
	\centering
	\includegraphics[width=3.5in]{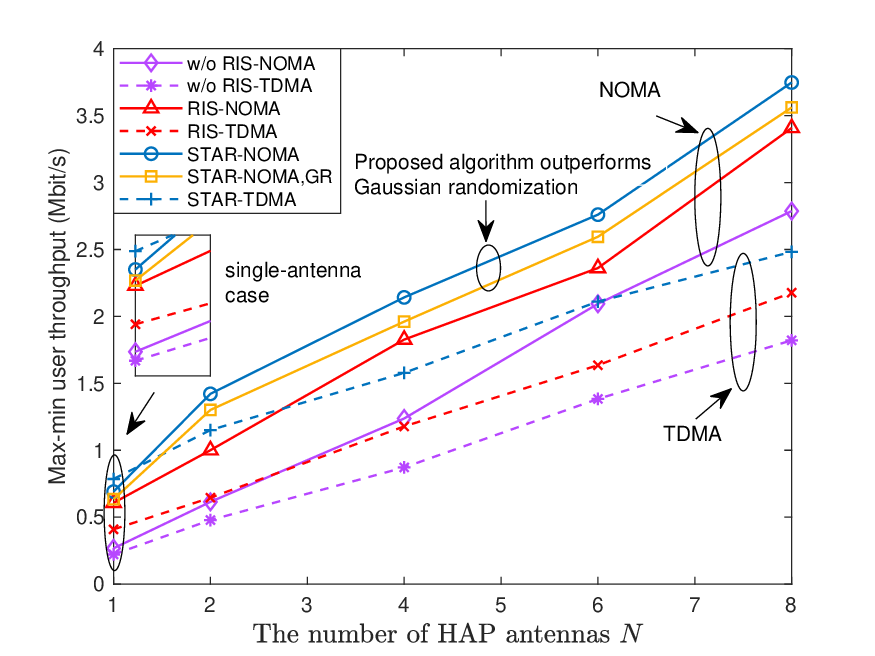}
	\caption{Max-min throughput versus number of HAP antennas.}
	\label{versus_antenna}
\end{figure}
In Fig. \ref{versus_antenna}, we investigate the max-min user throughput in relation to the number of HAP antennas. We set $M = 16$. In the single-antenna case, STAR-TS-TDMA outperforms the other schemes due to its interference-free communication and improved DoFs. In addition, the active beamforming gain at the HAP drives the increase in the achievable max-min user throughput for all schemes as $N$ increases. The difference is that NOMA strategies grow faster than TDMA strategies. The reason is that the DoFs from additional active antennas at the HAP are effective in canceling inter-user interference when receiving signals, which significantly compensates for the weaker interference handling capability of NOMA compared to TDMA. Besides, for NOMA strategies, both users have been recharging throughout the WPT phase. Therefore, as $N$ increases, each user can harvest more energy from the WPT phase to boost the data upload performance. With regard to the performance of STAR-RIS compared to the baseline schemes, STAR-RIS consistently outperforms in both NOMA and TDMA due to its natural deployment advantages and additional DoF gains. In addition, although the computational complexity of our proposed penalty-based algorithm is generally higher than that of the Gaussian randomization method due to its nested iterative structure, one can observe from Fig. \ref{versus_antenna} that our algorithm notably outperforms Gaussian randomization in terms of achievable performance. This is because the Gaussian randomization method first solves a relaxed problem without rank-one constraints using SDR, then reconstructs the solution via random sampling. Therefore, it can only offer an approximate solution to the original problem. In contrast, our algorithm directly incorporates the rank-one constraints into the optimization process as penalties applied over several iterations. Although this introduces additional complexity, it significantly improves the quality of the solution.

\subsection{Max-min Throughput Versus Number of STAR-RIS Elements}
\begin{figure}[t]
	\setlength{\abovecaptionskip}{0cm}   
	\setlength{\belowcaptionskip}{0cm}   
	\setlength{\textfloatsep}{7pt}
	\centering
	\includegraphics[width=3.5in]{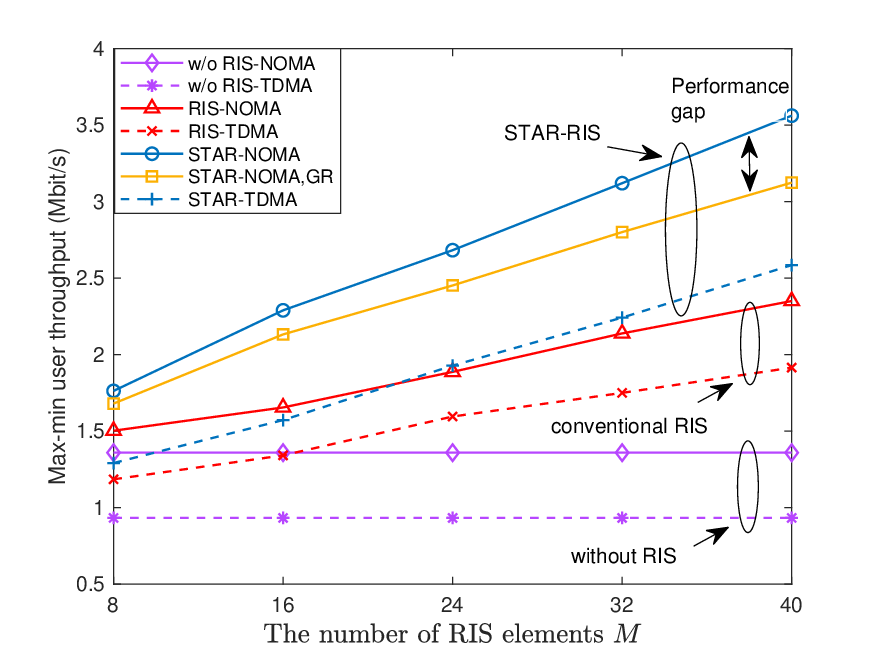}
	\caption{Max-min throughput versus number of STAR-RIS elements.}
	\label{versus_element}
\end{figure}
In Fig. \ref{versus_element}, we explore the achievable max-min user throughput versus the number of STAR-RIS elements $M$ with $N=4$. The results show that the max-min user throughput increases with $M$ for both the STAR-RIS and the conventional RIS because of the passive beamforming gain. However, the STAR-RIS rises noticeably faster than the conventional RIS. This is because the extra DoFs for the STAR-RIS allow the passive beamforming gains to be extended by a greater number of elements, which also explains why RIS-NOMA outperforms STAR-TS-TDMA when $M$ is small but vice versa when $M$ is large. Moreover, the performance gap between the different strategies for STAR-RIS widens as $M$ increases. This is anticipated as the energy leakage in ES can be effectively mitigated when $M$ is large, but the time utilization in TS is not improved. Besides, the performance gain of our proposed algorithm over the Gaussian randomization method increases as $M$ increases. This is because, as $M$ grows, the random sampling and normalization features of Gaussian randomization struggle to ensure that each element reaches its near-optimal point, limiting its performance gain. In contrast, our proposed algorithm optimizes all elements as a whole with higher computational effort, thus ensuring a high-quality solution regardless of the number of elements. This demonstrates that our proposed algorithm is a more effective and robust solution. Moreover, the trade-offs between solution quality and computational effort can provide useful guidelines for practical STAR-RIS implementation in WPCNs.

\subsection{Max-min Throughput Versus Time Allocation}
\begin{figure}[]
	\subfigure[For different numbers of antennas $N$.]{\label{versus_time1}
		\includegraphics[width= 3.5in]{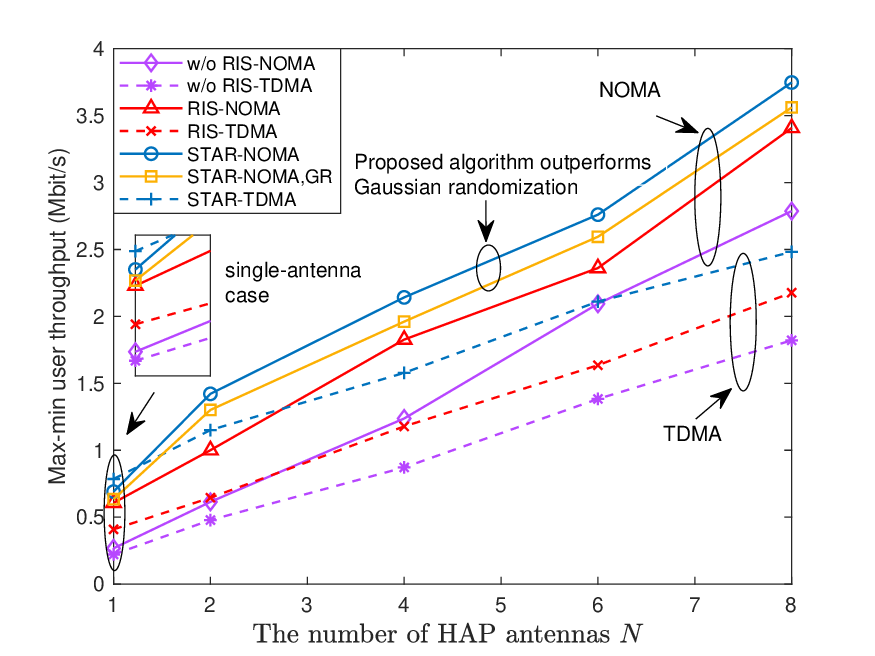}}
	\subfigure[For different numbers of elements $M$.]{\label{versus_time2}
		\includegraphics[width= 3.5in]{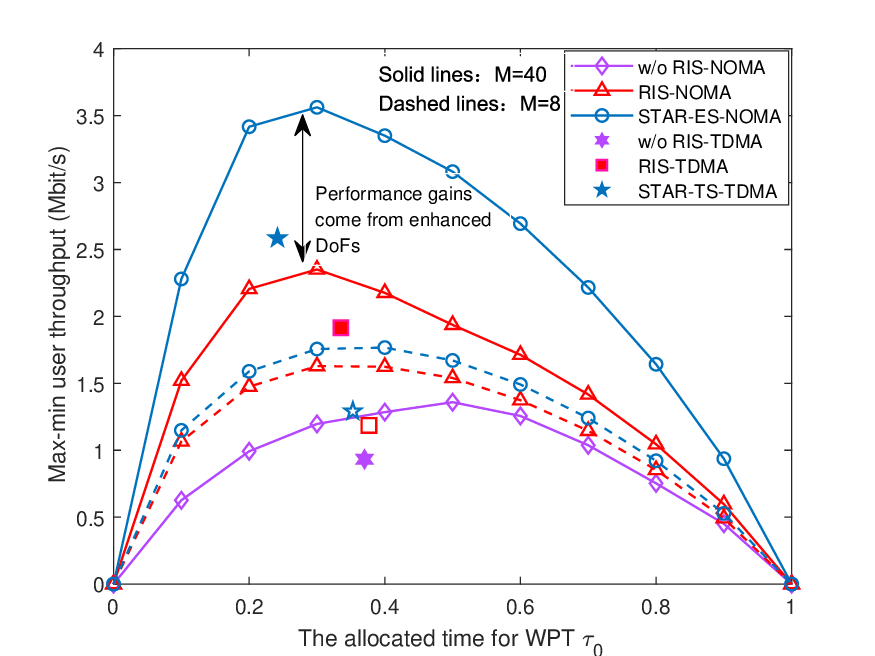}}
	\caption{\textcolor{black}{Max-min throughput versus time allocation.} } 
	\label{versus_time}
\end{figure}
In Fig. \ref{versus_time}, we examine the achievable max-min user throughput versus the time allocation in WPCN. Particularly, the results presented in Fig. \ref{versus_time1} are based on different numbers of antennas $N$ when $M=16$. It can be found for NOMA, the max-min user throughput first increases as $\tau_0$ increases until it reaches the optimal $\tau^{*}_0$, and then begins to decline as $\tau_0$ continues to grow. The reasons for this can be explained in the following way. When $\tau_0$ is small, i.e., the left region of $\tau^*_0$, users can only harvest a limited amount of energy from the HAP for the subsequent WIT phase, which severely weakens the system performance. Therefore, increasing the allocated time for WPT in this interval can provide users with more sufficient energy for information transfer, thus improving the max-min user throughput. However, as $\tau_0$ increases, the time available for WIT decreases correspondingly; when $\tau_0$ exceeds $\tau^*_0$, the energy recharge from the increase in time for WPT cannot compensate for the loss of time for WIT. Thus, the max-min user throughput also decreases as $\tau_0$ increases. In other words, $\tau^*_0$ strikes the best tradeoff between WPT and WIT. Besides, we also observe that the optimal time allocated to the WPT decreases as $N$ increases for both NOMA and TDMA. This makes sense because the active beamforming gains from additional antennas are able to enhance the transmission efficiency in both phases, thus allowing the users to spend less time harvesting sufficient energy and more time transmitting information. 

In Fig. \ref{versus_time2}, we extend the above study for different numbers of elements $M$ based on a special antenna case of $N=4$. Similar results to Fig. \ref{versus_time1} show that the max-min user throughput for NOMA schemes first increases and then decreases with $\tau_0$ increasing. In addition, the optimal time allocation for WPT $\tau^*_0$ decreases while the achievable max-min user rate increases for all schemes with the elements $M$ grows. To be more specific, the improved DoFs allow the STAR-RIS to outperform the conventional RIS by a large margin.
\subsection{Max-min Throughput Versus STAR-RIS Deployment Locations}
\begin{figure}[t]
	\setlength{\abovecaptionskip}{0cm}   
	\setlength{\belowcaptionskip}{0cm}   
	\setlength{\textfloatsep}{7pt}
	\centering
	\includegraphics[width=3.5in]{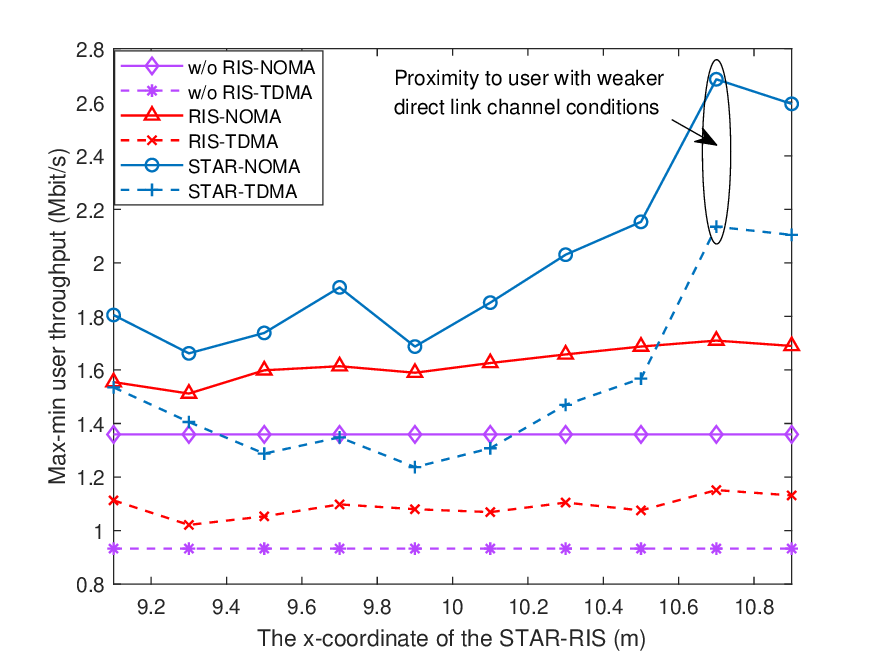}
	\caption{Max-min throughput versus STAR-RIS deployment locations.}
	\label{versus_location}
\end{figure}
In Fig. \ref{versus_location}, we study the effect of the STAR-RIS deployment locations on the achievable max-min user throughput. Herein, we consider a deterministic scenario for user $r$ located at $(9,0,0)$ meters and user $t$ located at $(11,0,0)$ meters, where the STAR-RIS is deployed on two user connection lines with a y-coordinate and a
z-coordinate of 0. In addition, we set $M=16$, $N=4$. As can be seen, deploying STAR-RIS/RIS close to the user $t$ can achieve better performance. This is because user $t$ suffers more severe direct path loss than user $r$ due to its longer communication distance with the HAP, which results in unfair performance between users in the traditional WPCN. Whereas deploying STAR-RIS near user $t$ can provide it with better indirect channel conditions than user $r$, which effectively reduces the gap in channel conditions between the two due to the direct links and makes the resource allocation more flexible. Therefore, a higher performance gain can be achieved. Besides, this phenomenon is more pronounced in STAR-RIS than in conventional RIS. This is made possible by the fact that the comprehensive coverage feature enables STAR-RIS to be deployed more reasonably and flexibly, so that the combined channel conditions from the HAP through the STAR-RIS to different users can be coordinated more efficiently.

\subsection{\textcolor{black}{Max-min Throughput Versus Number of Users}}
\begin{figure}[t]
	\setlength{\abovecaptionskip}{0cm}   
	\setlength{\belowcaptionskip}{0cm}   
	\setlength{\textfloatsep}{7pt}
	\centering
	\includegraphics[width=3.5in]{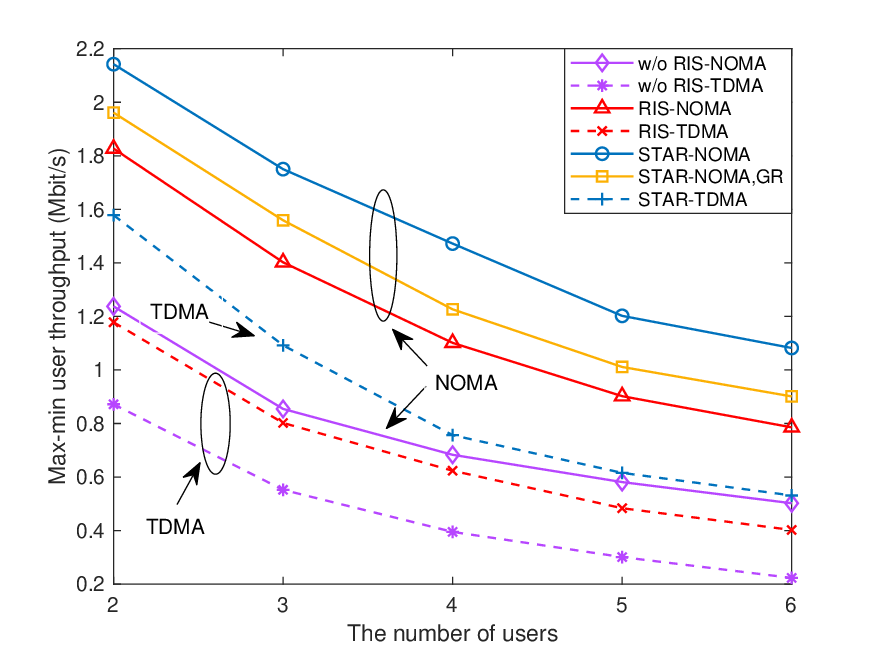}
	\caption{\textcolor{black}{Max-min throghput versus number of users.}}
	\label{versus_user}
\end{figure}
With appropriate modifications, our proposed algorithm can be extended to multi-user scenarios. To illustrate this, we study the max-min throughput versus the number of users in Fig. \ref{versus_user}. As can be observed, the max-min throughput decreases as the number of users increases for all strategies. This is expected, as new users inevitably take up some of the communication resources allocated to existing users to maintain fairness among all users. In particular, this phenomenon is more pronounced in TDMA strategies. This is because when the number of users increases, the efficiency of exploiting communication time for TDMA strategies is further degraded since it cannot allow all users to be served at the same time. In addition, in NOMA strategies, our proposed scheme for STAR-RISs consistently achieves the best performance because it has additional DoFs to eliminate inter-user interference in communication. 

\section{Conclusions}
In this paper, a STAR-RIS assisted WPCN was studied, where two transmission strategies were proposed, namely ES-NOMA and TS-TDMA. Based on this, the minimum user throughput maximization optimization problem was formulated for both strategies. To solve the resulting intractable problem, a two-layer based BCD framework was proposed for ES-NOMA and a simple method for the convex problem was utilized in TS-TDMA. Numerical results unveiled that STAR-RISs can significantly outperform conventional reflecting-only RISs for both transmission strategies. Furthermore, TS-TDMA is capable of achieving better suppression of the doubly-near-far effect in WPCNs for a single-antenna HAP case. However, in the case of multi-antenna HAP, ES-NOMA outperforms TS-TDMA, and the performance gap between the two becomes more pronounced as the number of STAR-RIS elements increases.

Given the growing research efforts in STAR-RIS prototyping \cite{STAR-RIS_implementation} and practical RIS-assisted WPT \cite{SWIPT_implementation}, these design insights obtained from the theoretical study in this work can provide useful guidance for the practical evaluation and operation of STAR-RIS-assisted WPCNs in the future work.

\section*{Appendix: Proof of Theorem 1} \label{Appendix:A}
Here, we start by recalling that the relaxed problem \eqref{ES_Transmit} has the following form:
\begin{subequations}\label{ES_Transmit1}
	\begin{align}
		& \max_{\{{\mathbf{V}^{\textup{ES}}\},\{p_k\}}} \Gamma \\
		& \ {\rm s.t.} \ R_t \geq \Gamma, \\
		&\quad \quad \widehat{R}_r^{\textup{ES}(l)} \geq \Gamma, \\
		&\quad \quad  \mathrm{Tr}\left(\mathbf{V}^{\textup{ES}}\right) \leq P_A, \\
		&\quad \quad p_k >0, \forall k \in \{t,r\},\\
		& \quad \quad (1-\overline{\tau})p_k\! \leq \! \overline{\tau}\mathrm{Tr}\left(\!\mathbf{V^{\textup{ES}}}\mathbf{G}^H_k\widetilde{\mathbf{U}}^{\textup{ES}}_k\mathbf{G}_k\!\right), k\!\in\!\{t,r\},\\
		&\quad \quad \mathbf{V}^{\textup{ES}} \succeq 0, \forall k \in\{r,t\}.
	\end{align}
\end{subequations}
Then, we consider the Lagrangian function of the objective function as \eqref{L1}, presented at the top of the page. $\psi_k$, $\mu_k$, $\chi_k$, $\zeta$, and $\mathbf{C}$ denote the non-negative Lagrangian coefficients of the corresponding terms. Based on this, the dual of the problem \eqref{ES_Transmit} can be further furmulated as follow:
\begin{subequations}\label{Lagrangian}
	\begin{align}
		&\label{L_0}\min_{\psi_k,\mu_0,\mu_k,\chi_k,\mathbf{C}} \mathcal{L} \\ 
		&\label{L_1}{\rm s.t.}\ \lambda_k \geq 0, \mu_k \geq 0, \chi_k \geq 0, \forall k \in\{t,r\}.\\ 
		&\label{L_2}\quad \quad \zeta \geq 0, \mathbf{C} \succeq 0. 
	\end{align}
\end{subequations}
As can be observed, with given $P_A>0$, a feasible solution can always be found for problem \eqref{ES_Transmit}. In addition to this, the optimal solution is obtained when the constraint $\mathrm{Tr}\left(\mathbf{V}^{\textup{ES}}\right)= P_A$ holds; otherwise, $\mathrm{Tr}\left(\mathbf{V}^{\textup{ES}}\right)$ can be further increased to obtain a larger objective value while keeping the other variables constant. Therefore, the relaxed problem \eqref{ES_Transmit} is feasible and bounded. The same conclusion applies to its dual problem \eqref{Lagrangian}. According to Theorem 3.2 in \cite{Rank_proof}, there always exists an optimal solution satisfying with $\mathrm{Tr}^2\left(\mathbf{V}^{\textup{ES}*}\right)\leq C_n$, where $C_n$ denotes the number of linear constraints associated with $\mathbf{V}^{\textup{ES}*}$ and $C_n=3$ in probelm \eqref{ES_Transmit}. Besides, $\mathbf{V}^{\textup{ES}}=0$ is not the optimal solution. Therefore, $\mathrm{Rank}\left(\mathbf{V}^{\textup{ES}*}\right)=1$ always holds in solving problem \eqref{ES_Transmit}.
\begin{figure*}
	\normalsize
	\begin{align}\label{L1}
		\mathcal{L}&=(1-\psi_t-\psi_r)\Gamma+\psi_tR^{\textup{ES}}_t+\psi_r\widehat{R}^{\textup{ES}}_r-\zeta(\mathrm{Tr}\left(\mathbf{V}^{\textup{ES}}\right)-P_A)+\sum_{k\in\{t,r\}}\chi_kp_k\nonumber \\
		&-\sum_{k\in\{t,r\}}\mu_k\left((1-\overline{\tau})p_k-\overline{\tau}\mathrm{Tr}\left(\!\mathbf{V^{\textup{ES}}}\mathbf{G}_k\widetilde{\mathbf{U}}^{\textup{ES}}_k\mathbf{G}^H_k\right)\right)+\mathrm{Tr}\left(\mathbf{C}\mathbf{V^{\textup{ES}}}\right).
	\end{align}
	\hrulefill \vspace*{0pt}
\end{figure*}
\bibliographystyle{IEEEtran}
\bibliography{mybib.bib}
\end{document}